\documentclass[sigconf]{acmart}

\usepackage{amsmath}
\usepackage{amstext}
\usepackage{fp-frame}
\usepackage[latin1]{inputenc}
\usepackage{mathpartir}
\usepackage{fancyvrb}
\usepackage{moreverb}
\usepackage{stmaryrd}
\usepackage{enumerate}
\usepackage{comment}
\usepackage{booktabs,array}
\usepackage{rotating}
\usepackage{xcolor}

\newcommand{\vb}[1]{\verb+#1+}

\newcommand{\defeq}{\triangleq}

\def\squareforqed{\hbox{\rlap{$\sqcap$}$\sqcup$}}
\def\qed{\ifmmode\squareforqed\else{\unskip\nobreak\hfil
\penalty50\hskip1em\null\nobreak\hfil\squareforqed
\parfillskip=0pt\finalhyphendemerits=0\endgraf}\fi}

\newcommand{\ttt}[1]{\texttt{#1}}

\newcommand{\cod}[1]{\llbracket #1 \rrbracket}
\newcommand{\lcod}[2]{\llbracket #1 \rrbracket_{#2}}

\newcommand{\store}{m}
\newcommand{\stores}{\Sigma}
\newcommand{\config}[2]{\langle #1,#2 \rangle}

\newcommand{\extend}[3]{#1\{#2 \mapsto #3\}}

\newcommand{\eassign}[2]{#1 := #2}

\newcounter{topiccounter}
\setcounter{topiccounter}{1}

\newcommand{\redx}{\rightarrow}
\newcommand{\redxs}{\redx^*}

\newcommand{\elet}[3]{\mathrm{let}\ #1 = #2\ \mathrm{in}\ #3}

\newcommand{\dom}{\mathrm{dom}}

\newcounter{problemcounter}
\setcounter{problemcounter}{1}

\newcounter{solncounter}
\setcounter{solncounter}{1}

\newcommand{\minifed}{\mathit{Overture}}
\newcommand{\minicat}{\minifed}
\newcommand{\fedprot}{\minifed}
\newcommand{\metaprot}{\mathit{Prelude}}

\newcommand{\prog}{\pi}

\newcommand{\bop}{\ \mathit{binop}\ }

\newcommand{\kernel}[2]{#1^{-1}(#2)}

\newcommand{\mems}{\mathit{mems}}

\newcommand{\margd}[2]{{#1}_{#2}}
\newcommand{\condd}[3]{#1_{({#2}|{#3})}}

\newcommand{\progtt}{\mathrm{BD}}
\newcommand{\vars}{\mathit{vars}}
\newcommand{\iov}{\mathit{iovars}}
\newcommand{\flips}{\mathit{flips}}

\newcommand{\fedcat}{\minifed}

\newcommand{\sx}[2]{\elab{\secret{#1}}{#2}}
\newcommand{\mx}[2]{\elab{\mesg{#1}}{#2}} 

\newcommand{\ox}[1]{\out{#1}}

\newcommand{\idealf}{\mathcal{F}}
\newcommand{\SIM}{\mathrm{Sim}}
\newcommand{\prob}{\mathrm{Pr}}
\newcommand{\dist}{\mathrm{D}}

\newcommand{\flip}[2]{\ttt{flip[}#1\ttt{,}#2\ttt{]}}
\newcommand{\secret}[2]{\ttt{s[}#1\ttt{,}#2\ttt{]}}

\newcommand{\enot}{\ttt{not}}
\newcommand{\eand}{\ttt{and}}
\newcommand{\eor}{\ttt{or}}
\newcommand{\exor}{\ttt{xor}}
\renewcommand{\elet}[3]{\ttt{let}\ #1\ \ttt{=}\ #2\ \ttt{in}\ #3}

\renewcommand{\redx}{\xrightarrow{}}
\renewcommand{\redxs}{\xrightarrow{}^{*}}

\newcommand{\secrets}{\mathit{secrets}}

\newcommand{\views}{\mathit{views}}

\newcommand{\cid}{\iota}

\newcommand{\msend}[4]{\elab{\mesg{#1}}{#2}\ \ttt{:=}\ \elab{#3}{#4}}
\newcommand{\OT}[3]{\ttt{OT(} #1 \ttt{,}\ #2 \ttt{,}\ #3 \ttt{)}}

\newcommand{\codebase}{\mathcal{C}}

\newcommand{\flab}{\ell}
\newcommand{\be}{\varepsilon}
\newcommand{\instr}{\mathbf{c}}
\newcommand{\solvealg}{\mathit{models}}
\newcommand{\solve}[3]{\solvealg\ #1\ #2\ #3}

\newcommand{\logit}[1]{\lfloor #1 \rfloor}
\newcommand{\runs}{\mathit{runs}}

\newcommand{\datalog}{\mathit{datalog}}
\newcommand{\concat}{\ttt{++}}

\newcommand{\detx}[1]{\mathbf{D}(#1)}
\newcommand{\unix}[1]{\mathbf{U}(#1)}
\newcommand{\sep}[3]{#1 \vdash #2 * #3}
\newcommand{\condp}[3]{#1|#2 \vdash #3}
\newcommand{\conddetx}[3]{\condp{#1}{#2}{\detx{#3}}}
\newcommand{\condsep}[4]{\condp{#1}{#2}{#3 * #4}}
\newcommand{\condunix}[3]{\condp{#1}{#2}{\unix{#3}}}

\newcommand{\pmf}{\mathit{P}}


\renewcommand{\flip}[1]{\ttt{r[}#1\ttt{]}}
\newcommand{\locflip}{\ttt{r[}\mathtt{local}\ttt{]}}
\renewcommand{\secret}[1]{\ttt{s[}#1\ttt{]}}

\newcommand{\mesg}[1]{\ttt{m[}#1\ttt{]}}
\newcommand{\out}[1]{\elab{\ttt{out}}{#1}}
\newcommand{\rvl}[1]{\ttt{p[}#1\ttt{]}}

\newcommand{\elab}[2]{#1\ttt{@}#2}
\renewcommand{\eassign}[4]{\elab{#1}{#2} := \elab{#3}{#4}}
\newcommand{\xassign}[3]{#1 := \elab{#2}{#3}}
\newcommand{\pubout}[3]{\out{#1} := \elab{#2}{#3}}
\newcommand{\reveal}[3]{\rvl{#1} := \elab{#2}{#3}}

\newcommand{\adversary}{\mathcal{A}}
\newcommand{\aredx}{\redx_{\adversary}}
\newcommand{\aredxs}{\redxs_{\adversary}}
\newcommand{\arewrite}{\mathit{rewrite}_{\adversary}}
\newcommand{\cinputs}{V_{C \rhd H}}
\newcommand{\houtputs}{V_{H \rhd C}}
\newcommand{\aruns}{\mathit{runs}_\adversary}
\newcommand{\botruns}{\mathit{runs}_{\adversary,\bot}}

\renewcommand{\store}{\sigma}

\renewcommand{\runs}{\mathit{runs}}

\newcommand{\fcod}[1]{\lcod{#1}{}}
\renewcommand{\flips}{\mathit{rands}}

\newcommand{\ftimes}{*}
\newcommand{\fplus}{+}
\newcommand{\fminus}{-}
\newcommand{\macgv}[1]{\langle #1 \rangle}


\newcommand{\preproc}{\mathit{preproc}}
\newcommand{\assert}[1]{\ttt{assert(}#1\ttt{)}}
\newcommand{\mv}{\nu}
\newcommand{\andgmw}{\ttt{andgmw}}
\newcommand{\decodegmw}{\ttt{decodegmw}}
\newcommand{\bodies}{\mathit{bodies}}

\newcommand{\compfig}
{
  \begin{fpfig}[t]{Comparison of systems for verification of MPC security in PLs.}{fig-comp}
    \begin{tabular}{lccccccc}
      & 
      \begin{sideways} probabilistic language \end{sideways} &
      \begin{sideways} probabilistic conditioning \end{sideways} & 
      \begin{sideways} low-level protocols \end{sideways} & 
      \begin{sideways} passive security \end{sideways} & 
      \begin{sideways} malicious security \end{sideways}& 
      \begin{sideways} hyperproperties \end{sideways}& 
      \begin{sideways} automation \end{sideways}\\\hline\hline
      Haskell EDSL \cite{6266151} & \checkmark &  & \checkmark  & \checkmark & & \checkmark & \checkmark \\\hline
      MPC in SecreC \cite{almeida2018enforcing} & \checkmark & \checkmark &   & \checkmark & & \checkmark & \checkmark \\\hline
      $\lambda_{\text{obliv}}$ \cite{darais2019language} & \checkmark & & \checkmark & & & \checkmark & \checkmark \\\hline
      PSL \cite{barthe2019probabilistic} & \checkmark & & \checkmark & & & & \\\hline
      Lilac \cite{li2023lilac} & \checkmark & \checkmark & & & & & \\\hline
      Wys$^*$ \cite{wysstar} & & & & \checkmark & & \checkmark & \checkmark\\\hline
      Viaduct \cite{10.1145/3453483.3454074,viaduct-UC} & & & & \checkmark & \checkmark & \checkmark & \checkmark\\\hline
      MPC in EasyCrypt \cite{8429300} &  \checkmark &  \checkmark &  \checkmark & \checkmark & \checkmark & \checkmark & \\\hline
      $\metaprot$/$\minifed$ & \checkmark & \checkmark & \checkmark & \checkmark & \checkmark & \checkmark & \checkmark\\
      \hline
    \end{tabular}
  \end{fpfig}
}

\newcommand{\minifedfig}
{
\begin{fpfig}[t]{Top-to-bottom: Basic $\minifed$ syntax, expression interpretation, and command evaluation.}{fig-minifed}
  {
    $$
    \begin{array}{rcl@{\hspace{2mm}}r}
      \multicolumn{4}{l}{v \in \mathbb{F}_p,\ w \in \mathrm{String},\ \cid \in \mathrm{Clients} \subset  \mathbb{N} }\\[2mm] 
      \be &::=& \flip{w} \mid \secret{w} \mid \mesg{w} \mid \rvl{w} \mid & \textit{expressions}\\
      & & v \mid \be \fminus \be \mid \be \fplus \be \mid \be \ftimes \be \\[2mm]
      x &::=& \elab{\flip{w}}{\cid} \mid \elab{\secret{w}}{\cid} \mid \elab{\mesg{w}}{\cid} \mid \rvl{w} \mid \out{\cid} & \textit{variables} \\[2mm]
      \instr &::=& \eassign{\mesg{w}}{\cid}{\be}{\cid} \mid
      \reveal{w}{e}{\cid} \mid \pubout{\cid}{\be}{\cid} & \textit{commands} \\[2mm]
      \prog &::=& \varnothing \mid \instr; \prog & \textit{protocols}
    \end{array}
    $$

    \medskip
  
  \rule{80mm}{0.5pt}
    
  $$
  \begin{array}{rcl}
    \lcod{\store, v}{\cid} &=& v\\
    \lcod{\store, \be_1 \fplus \be_2}{\cid} &=& \fcod{\lcod{\store, \be_1}{\cid} \fplus \lcod{\store, \be_2}{\cid}}\\ 
    \lcod{\store, \be_1 \fminus \be_2}{\cid} &=& \fcod{\lcod{\store, \be_1}{\cid} \fminus \lcod{\store, \be_2}{\cid}}\\ 
    \lcod{\store, \be_1 \ftimes \be_2}{\cid} &=& \fcod{\lcod{\store, \be_1}{\cid} \ftimes \lcod{\store, \be_2}{\cid}}\\
    \lcod{\store, \flip{w}}{\cid} &=& \store(\elab{\flip{w}}{\cid})\\
    \lcod{\store, \secret{w}}{\cid} &=& \store(\elab{\secret{w}}{\cid})\\
    \lcod{\store, \mesg{w}}{\cid} &=& \store(\elab{\mesg{w}}{\cid})\\
    \lcod{\store, \rvl{w}}{\cid} &=& \store(\rvl{w})\\
  \end{array}
  $$

  \vspace{4mm}
  
  \rule{80mm}{0.5pt}

  \begin{mathpar}
    (\store, \xassign{x}{\be}{\cid};\prog) \redx (\extend{\store}{x}{\lcod{\store,\be}{\cid}}, \prog)
  \end{mathpar}
  }
\end{fpfig}
}

\newcommand{\adversaryfig}
{
\begin{fpfig}[h]{Adversarial semantics, and semantics of $\ttt{assert}$.}{fig-adversary}
  {
    $$
    \begin{array}{rclr}
      (\store, \xassign{x}{\be}{\cid};\prog) &\aredx&
      (\extend{\store}{x}{\lcod{\store,\be}{\cid}}, \prog) & \cid \in H\\
      (\store, \xassign{x}{\be}{\cid};\prog) &\aredx&
      (\extend{\store}{x}{\lcod{\arewrite(\store_C,\be)}{\cid}}, \prog) & \cid \in C
    \end{array}
    $$
    
    $$
    \begin{array}{rcl@{\qquad}r}
      (\store,\elab{\assert{\phi(\be)}}{\cid};\prog) &\aredx& (\store,\prog) & \text{if\ } \phi(\lcod{\store,\be}{\cid}) \text{\ or\ } \cid \in C\\
      (\store,\elab{\assert{\phi(\be)}}{\cid};\prog) &\aredx& (\store,\varnothing)  & \text{if\ } \neg\phi(\store,\lcod{\store,\be}{\cid})
    \end{array}
    $$
  }
\end{fpfig}
}

\newcommand{\solvefig}
{
  \begin{fpfig}[t]{Filtering solutions to expressions in $\mathbb{F}_2$.}{fig-solve}
    {\small \begin{eqnarray*}
        \solve{\stores}{1}{\cid} &=& \stores\\
        \solve{\stores}{0}{\cid} &=& \varnothing\\
        \solve{\stores}{\mesg{w}}{\cid} &=& \{ \store \in \stores \mid \store(\elab{\mesg{w}}{\cid}) = 1 \} \\
        \solve{\stores}{\flip{w}}{\cid} &=& \{ \store \in \stores \mid \store(\elab{\flip{w}}{\cid}) = 1 \} \\
        \solve{\stores}{\secret{w}}{\cid} &=& \{ \store \in \stores \mid \store(\elab{\secret{w}}{\cid}) = 1 \} \\
        \solve{\stores}{\rvl{w}}{\cid} &=& \{ \store \in \stores \mid \store(\rvl{w}) = 1 \} \\
        \solve{\stores}{(\enot\ \be)}{\cid} &=& \stores - (\solve{\stores}{\be}{\cid})\\
        \solve{\stores}{(\be_1\ \eand\ \be_2)}{\cid} &=& (\solve{\stores}{\be_1}{\cid}) \cap (\solve{\stores}{\be_2}{\cid}) \\
        \solve{\stores}{(\be_1\ \eor\ \be_2)}{\cid} &=& (\solve{\stores}{\be_1}{\cid}) \cup (\solve{\stores}{\be_2}{\cid}) \\
        \solve{\stores}{(\be_1\ \exor\ \be_2)}{\cid} &=&
        ((\solve{\stores}{\be_1}{\cid}) \cap (\stores - \solve{\stores}{\be_2}{\cid})) \cup\\
        && ((\stores - \solve{\stores}{\be_1}{\cid}) \cap (\solve{\stores}{\be_2}{\cid})) 
      \end{eqnarray*} }
  \end{fpfig}
}

\newcommand{\metaprotfig}
{
  \begin{fpfig}[t]{$\metaprot$ syntax (T), evaluation contexts (M), and operational semantics (B).}{fig-metaprot}
    {\small{
    $$
    \begin{array}{rcl}
      \multicolumn{3}{l}{\flab \in \mathrm{Field},\   y \in \mathrm{EVar}, \  f \in \mathrm{FName}}\\[1mm]
      e &::=& \mv \mid \flip{e} \mid \secret{e} \mid \mesg{e} \mid \rvl{e} \mid e \bop e \mid \\
      & & \msend{e}{e}{e}{e} \mid \reveal{e}{e}{e} \mid \pubout{e}{e}{e} \mid \\
      & &  \elab{\assert{\phi(e)}}{e} \mid y \mid \elet{y}{e}{e} \mid \\
      & & f(e,\ldots,e) \mid \{ \flab = e; \ldots; \flab = e \} \mid e.\flab \mid e;e \\[1mm]
      \bop &::=& \fplus \mid \fminus \mid \ftimes \mid \concat  \\[1mm]
      \mv &::=& w \mid \cid \mid \be \mid \{ \flab = \mv;\ldots;\flab = \mv \} 
      \mid \ttt{()} \\[1mm] 
           \mathit{fn} &::=& f(y,\ldots,y) \{ e \} 
    \end{array}
    $$
    }

    \rule{80mm}{0.5pt}

    {\small
    $$
      \begin{array}{c}
        E \ ::= \\
        \begin{array}{l}
          [\,] \mid E\ \bop\ e \mid \mv\ \bop\ E \mid \flip{E} \mid \secret{E} \mid \mesg{E} \mid \rvl{E} \mid \\
          \msend{E}{e}{e}{e} \mid \msend{\mv}{E}{e}{e} \mid \msend{\mv}{\mv}{E}{e} \mid \msend{\mv}{\mv}{\mv}{E} \mid\\
          \reveal{E}{e}{e} \mid \reveal{\mv}{E}{e} \mid \reveal{\mv}{\mv}{E} \mid\\
          \pubout{E}{e}{e} \mid \pubout{\mv}{E}{e} \mid \pubout{\mv}{\mv}{E} \mid\\
          \elab{\assert{\phi(E)}}{e} \mid \elab{\assert{\phi(\mv)}}{E} \mid\\
          \elet{y}{E}{e} \mid f(\mv,\ldots,\mv,E,e,\ldots,e) \mid E;e \\
          \{ \flab = \mv;\ldots;\flab = \mv;\flab = E;\flab = e;\ldots;\flab = e \} \mid E.\flab 
        \end{array}
      \end{array}
    $$
    \vspace{.5mm}
    
    \rule{80mm}{0.5pt}
    
    \begin{mathpar}
      \config{\prog}{\elet{y}{\mv}{e}} \redx \config{\prog}{e[\mv/y]}
       
      \inferrule
          {\codebase(f) = y_1,\ldots,y_n,\ e}
          {\config{\prog}{f(\mv_1,...,\mv_n)} \redx \config{\prog}{e[\mv_1/y_1,\ldots,\mv_n/y_n]}} 
      
      \config{\prog}{\{\ldots; \flab = \mv; \ldots\}.\flab} \redx \config{\prog}{\mv}
      
      \config{\prog}{w_1\concat w_2} \redx \config{\prog}{w_1w_2}
      
      \config{\prog}{\mv;e} \redx \config{\prog}{e}
      
      \config{\prog}{\instr} \redx \config{\prog;\instr}{()}
      
      \inferrule
          {\config{\prog}{e} \redx \config{\prog'}{e'}}
          {\config{\prog}{E[e]} \redx \config{\prog'}{E[e']}}
    \end{mathpar}}
  }
\end{fpfig}
}

\setcopyright{acmlicensed}  %
\copyrightyear{2024}
\acmYear{2024}
\acmConference[PPDP '24]{PPDP '24}{September 10-11, 2024}{Milano, Italy}
\acmBooktitle{Proceedings of the 26th Symposium on Principles and Practice of Declarative Programming, PPDP 2024, Milano, Italy, September 10-11, 2024}
\acmPrice{15.00}
\acmISBN{9-8-4007-0969-297}

\begin{document}

\title{Language-Based Security for Low-Level MPC}

\author{Christian Skalka}
\affiliation{
  \institution{University of Vermont}
  \city{}
  \country{}
}
\email{ceskalka@uvm.edu}

\author{Joseph P. Near}
\affiliation{
  \institution{University of Vermont}
  \city{}
  \country{}
}
\email{jnear@uvm.edu}

\begin{abstract}
  Secure Multi-Party Computation (MPC) is an important
  enabling technology for data privacy in modern distributed
  applications. Currently, proof methods for low-level MPC protocols
  are primarily manual and thus tedious and error-prone, and are also
  non-standardized and unfamiliar to most PL theorists. As a step
  towards better language support and language-based enforcement, we
  develop a new staged PL for defining a variety of low-level
  probabilistic MPC protocols. We also formulate a collection of
  confidentiality and integrity hyperproperties for our language model
  that are familiar from information flow, including conditional
  noninterference, gradual release, and robust declassification. We
  demonstrate their relation to standard MPC threat models of passive
  and malicious security, and how they can be leveraged in security
  verification of protocols. To prove these properties we develop
  automated tactics in $\mathbb{F}_2$ that can be integrated with
  separation logic-style reasoning.
\end{abstract}

\begin{CCSXML}
<ccs2012>
   <concept>
       <concept_id>10002978.10002986.10002990</concept_id>
       <concept_desc>Security and privacy~Logic and verification</concept_desc>
       <concept_significance>500</concept_significance>
       </concept>
   <concept>
       <concept_id>10003752.10003753.10003757</concept_id>
       <concept_desc>Theory of computation~Probabilistic computation</concept_desc>
       <concept_significance>300</concept_significance>
       </concept>
   <concept>
       <concept_id>10003752.10003790.10003806</concept_id>
       <concept_desc>Theory of computation~Programming logic</concept_desc>
       <concept_significance>500</concept_significance>
       </concept>
 </ccs2012>
\end{CCSXML}

\ccsdesc[500]{Security and privacy~Logic and verification}
\ccsdesc[500]{Theory of computation~Probabilistic computation}
\ccsdesc[500]{Theory of computation~Programming logic}

\keywords{Secure multiparty computation, security verification, probabilistic programming, programming languages, information flow.}

\settopmatter{printfolios=true}
\maketitle

\section{Introduction}

Secure Multi-Party Computation (MPC) protocols support data privacy in
important modern, distributed applications such as privacy-preserving
machine learning \cite{li2021privacy, knott2021crypten,
  koch2020privacy, liu2020privacy} and Zero-Knowledge proofs in
blockchains \cite{ishai2009zero, lu2019honeybadgermpc,
  gao2022symmeproof, tomaz2020preserving}. The security semantics of
MPC include both confidentiality and integrity properties incorporated
into models such as real/ideal (aka simulator) security and universal
composability (UC), developed primarily by the cryptography community
\cite{evans2018pragmatic}.  Related proof methods are well-studied
\cite{Lindell2017} but mostly manual. Somewhat independently, a
significant body of work in programming languages has focused on
definition and enforcement of confidentiality and integrity
\emph{hyperproperties} \cite{10.5555/1891823.1891830} such as
noninterference and gradual release
\cite{4223226,sabelfeld2009declassification}. Following a tradition of
connecting cryptographic and PL-based security models
\cite{10.1007/3-540-44929-9_1,10.1145/3571740}, recent work has also
recognized connections between MPC security models and hyperproperties
of, e.g., noninterference \cite{8429300}, and even leveraged these
connections to enforce MPC security through mechanisms such as
security types \cite{10.1145/3453483.3454074}. Major benefits of this
connection in an MPC setting include better language abstractions for
defining protocols and for mechanization and even automation of
security proofs.  The goal of this paper is to develop a PL model for
defining a variety of low-level probabilistic MPC protocols, to
formulate a collection of confidentiality and integrity
hyperproperties for our model with familiar information flow
analogs, and to show how these properties can be leveraged for
improved proof automation.

The distinction between high- and low-level languages for MPC is
important. High-level languages such as Wysteria
\cite{rastogi2014wysteria} and Viaduct \cite{10.1145/3453483.3454074}
are designed to provide effective programming of full
applications. These language designs incorporate sophisticated
verified compilation techniques such as orchestration
\cite{viaduct-UC} to guarantee high-level security properties, and
they rely on \emph{libraries} of low-level MPC protocols, such as
binary and arithmetic circuits. These low-level protocols encapsulate
abstractions such as secret sharing and semi-homomorphic encryption,
and must be verified by hand. So, low-level MPC
programming and protocol verification remains a distinct challenge and both
critical to the general challenge of PL design for MPC and
complementary to high-level language design.

The connection between information flow hyperproperties and MPC
security is also complicated especially at a low level.  MPC protocols
involve communication between a group of distributed clients called a
\emph{federation} that collaboratively compute and publish the result
of some known \emph{ideal functionality} $\idealf$, maintaining
confidentiality of inputs to $\idealf$ without the use of a trusted
third party. However, since the outputs of $\idealf$ are public, some
information about inputs is inevitably leaked. Thus, the ideal
functionality establishes a declassification policy
\cite{sabelfeld2009declassification}, which is more difficult to
enforce than pure noninterference.  And subtleties of, e.g.,
semi-homomorphic encryption are central to both confidentiality and
integrity properties of protocols and similarly difficult to track
with coarse-grained security types alone.

Nevertheless, as previous authors have observed
\cite{5a51987acaa84c43bb4bf5bcc7d01683}, low-level protocol design
patterns such as secret sharing and circuit gate structure have
compositional properties that can be independently verified and then
leveraged in larger proof contexts. We contribute to this line of work
by developing an automated verification technique for subprotocols and
show how it can be integrated as a tactic in a larger security proof.

\subsection{Overview and Contributions}

In summary, our work provides a complete methodology for end-to-end verification
of MPC protocols via three components:
\begin{enumerate}
\item A \textbf{low-level language} for defining MPC primitives
  (Section \ref{section-lang}) with an associated \textbf{metalanguage}
  to ease programming (Section \ref{section-metalang})\footnote{By metalanguage
  we mean a multi-stage aka metaprogramming language where code is a value, as
  in, e.g., MetaML \cite{TAHA2000211}.}. 
\item A \textbf{fully-automated verification method} for low-level MPC
  primitives in $\mathbb{F}_2$ (Section \ref{section-bruteforce}).
\item A \textbf{partially-automated verification method} for
  MPC protocols, which leverages automated proofs for
  low-level primitives (Section \ref{section-example-gmw}).
\end{enumerate}
As part of this methodology, we also develop hyperproperties that
encode MPC security that may be of independent interest. The complete
methodology enables the verification of real-world MPC protocols like
GMW~\cite{goldreich2019play}. 

\paragraph{Language design.} In Section \ref{section-lang} we
develop a new probabilistic programming language $\minifed$ for
defining synchronous distributed protocols over an arbitrary
arithmetic field. The syntax and semantics provides a succinct account
of synchronous messaging between protocol \emph{clients}. In Section
\ref{section-metalang} we define a metalanguage $\metaprot$ that
dynamically generates $\minifed$ protocols. It is able to express
important low-level abstractions, as we illustrate via implementations
of protocols including Shamir addition (Section \ref{section-lang}),
GMW boolean circuits (Section \ref{section-example-gmw}), and Beaver
Triple multiplication gates with BDOZ authentication (Section
\ref{section-example-bdoz}).

\paragraph{Hyperproperty formulation.} In Section \ref{section-model} we
develop our formalism for expressing the joint probability mass function of
program variables, and give standard definitions of passive and
malicious real/ideal security in our model. In Section
\ref{section-hyper}, we formulate a variety of familiar information
flow properties in our probabilistic setting, including conditional
noninterference, gradual release, and robust declassification, and
consider the relation between these and real/ideal security.  While it
has been previously shown that probabilistic conditional
noninterference is sound for passive security, we formulate new
properties of integrity which, paired with passive security, imply
malicious security (Theorem \ref{theorem-integrity}). We observe
in Section \ref{section-example-bdoz} that authentication mechanisms
such as BDOZ/SPDZ style MACs enforce a strictly weaker property
of ``cheating detection'' (Lemma \ref{lemma-cheating}).

\paragraph{Fully and partially automated verification.} In Section
\ref{section-bruteforce} we develop a method for automatically
computing the probability mass function (pmf) of $\minifed$ protocols
in $\mathbb{F}_2$, that can be automatically queried to enforce
hyperproperties of security. This method is perfectly accurate but has
high complexity; we show this can be partially mitigated by conversion
of protocols in $\mathbb{F}_2$ to stratified Datalog which is amenable
to HPC acceleration. Furthermore, in Section \ref{section-example-gmw}
we consider in detail how this automated technique can be used as a
local automated tactic for proving security in arbitrarily large GMW
circuits using conditional probabilistic independence as in
\cite{li2023lilac} (Lemmas \ref{lemma-gmwtactic} and
\ref{lemma-gmwinvariant} and Theorem \ref{theorem-gmw}).

\compfig

\subsection{Related Work}
\label{section-related-work}

Our main focus is on PL design and automated and semi-automated
reasoning about security properties of low-level MPC protocols. Prior
work has considered \textbf{automated} verification of
\textbf{high-level} protocols and \textbf{manual} verification of
\textbf{low-level} protocols---but none offers the combination of
automation and low-level support we consider.
We summarize this comparison in Figure
\ref{fig-comp}, with the caveat that works vary in the degree of
development in each dimension.


As mentioned above, several high-level languages have been developed
for writing MPC applications, and frequently exploit the connection
between hyperproperties and MPC security. Previous work on analysis
for the SecreC language
\cite{almeida2018enforcing,10.1145/2637113.2637119} is concerned with
properties of complex MPC circuits, in particular a user-friendly
specification and automated enforcement of declassification bounds in
programs that use MPC in subprograms. The Wys$^\star$ language
\cite{wysstar}, based on Wysteria \cite{rastogi2014wysteria}, has
similar goals and includes a trace-based semantics for reasoning about
the interactions of MPC protocols. Their compiler also guarantees that
underlying multi-threaded protocols enforce the single-threaded source
language semantics. These two lines of work were focused on passive
security. The Viaduct language
\cite{10.1145/3453483.3454074} has a well-developed
information flow type system that automatically enforces both
confidentiality and integrity through hyperproperties such as robust
declassification, in addition to rigorous compilation guarantees
through orchestration \cite{viaduct-UC}. However, these high level
languages lack probabilistic features and other abstractions of
low-level protocols, the implementation and security of which are
typically assumed as a selection of library components.

Various related low-level languages with probabilistic features have
also been developed. The $\lambda_{\mathrm{obliv}}$ language
\cite{darais2019language} uses a type system to automatically
enforce so-called probabilistic trace obliviousness.  But similar to
previous work on oblivious data structures \cite{10.1145/3498713},
obliviousness is related to pure noninterference, not the relaxed form
related to passive MPC security. The Haskell-based security type
system in \cite{6266151} enforces a version of noninterference that is
sound for passive security, but does not verify the correctness of
declassifications and does not consider malicious security. And
properties of real/ideal passive and malicious security for a
probabilistic language have been formulated in EasyCrypt
\cite{8429300}-- though their proof methods, while mechanized, are
fully manual, and their formulation of malicious security is not as
clearly related to robust declassification as is the one we present in
Section \ref{section-hyper}.

Program logics for probabilistic languages and specifically reasoning
about properties such as joint probabilistic independence is also
important related work. Probabilistic Separation Logic (PSL)
\cite{barthe2019probabilistic} develops a logical framework for
reasoning about probabilistic independence (aka separation) in
programs, and they consider several (hyper)properties, such as perfect
secrecy of one-time-pads and indistinguishability in secret sharing,
that are critical to MPC. However, their methods are manual, and
don't include conditional independence (separation). This
latter issue has been addressed in Lilac \cite{li2023lilac}. The
application of Lilac-style reasoning to MPC protocols has not
previously been explored, as we do in Section
\ref{section-example-gmw}.

Our work also shares many ideas with probabilistic programming
languages designed to perform (exact or approximate) statistical
inference~\cite{holtzen2020scaling, carpenter2017stan, wood2014new,
  bingham2019pyro, albarghouthi2017fairsquare, de2007problog,
  pfeffer2009figaro, saad2021sppl}. Our setting, however, requires
verifying properties beyond inference, including conditional
statistical independence. Recent work by Li et al.~\cite{li2023lilac} proposes a
manual approach for proving such properties, but does not provide
automation.

\section{The $\minicat$ Protocol Language}
\label{section-lang}

The $\minifed$ language establishes a basic model of synchronous
protocols between a federation of \emph{clients} exchanging values in
the binary field. A model of synchronous communication captures a wide
range of MPC protocols. Concurrency is out of scope in this work but
an avenue for future work. The lack of sophisticated control
structures in $\minifed$ is intentional, since minimizing features
eases analysis and control abstractions such as function definitions
can be integrated into a metalanguage that generates $\minifed$
programs (Section \ref{section-metalang}).

We identify clients by natural numbers and federations- finite sets of
clients- are always given statically.  Our threat model assumes a
partition of the federation into \emph{honest} $H$ and \emph{corrupt}
$C$ subsets. We model probabilistic programming via a \emph{random
tape} semantics. That is, we will assume that programs can make
reference to values chosen from a uniform random distributions defined
in the initial program memory.  Programs aka protocols execute
deterministically given the random tape.

\subsection{Syntax}

\minifedfig

The syntax of $\minifed$, defined in Figure \ref{fig-minifed},
includes values $v$ and standard operations of addition, subtraction,
and multiplication in a finite field $\mathbb{F}_p$ where $p$ is some
prime.  Protocols are given input secret values $\secret{w}$ as well
as random samples $\flip{w}$ on the input tape, implemented using a
\emph{memory} as described below (Section
\ref{section-lang-semantics}) where $w$ is a distinguishing 
identifier string. Protocols are sequences of assignment commands of three
different forms:
\begin{itemize}
\item $\eassign{\mesg{w}}{\cid_2}{\be}{\cid_1}$: This
  is a \emph{message send} where expression $\be$ is computed
  by client $\cid_1$ and sent to client $\cid_2$ as message
  $\mesg{w}$.
\item $\reveal{w}{\be}{\cid}$: This
  is a \emph{public reveal} where expression $\be$ is computed
  by client $\cid$ and broadcast to the federation, typically
  to communicate intermediate results for use in final output
  computations.
\item $\pubout{\cid}{\be}{\cid}$: This
  is an \emph{output} where expression $\be$ is computed
  by client $\cid$ and reported as its output. As a
  sanity condition we disallow commands
  $\pubout{\cid_1}{\be}{\cid_2}$ where $\cid_1\ne\cid_2$.
\end{itemize}
For example, in the following protocol, a client 1
subtracts a random sample $\flip{y}$ from $\mathbb{F}_p$ from their
secret value $\secret{x}$ and sends the result to client
2 as a message $\mesg{z}$:
$$
\eassign{\mesg{z}}{2}{(\secret{x} - \flip{y})}{1}
$$ Both messages $\mesg{w}$ and reveals $\rvl{w}$ can be referenced in
expressions once they've been defined.  This distinction between
messages and broadcast public reveal is consistent with previous
formulations, e.g., \cite{6266151}. To identify and distinguish
between collections of variables in protocols we introduce the
following notation.
\begin{definition}
We let $x$ range over \emph{variables} which are identifiers where
client ownership is specified- e.g.,
$\elab{\mesg{\mathit{foo}}}{\cid}$ is a message $\mathit{foo}$ that
was sent to $\cid$. We let $X$ range over sets of variables, and more
specifically, $S$ ranges over sets of secret variables
$\elab{\secret{w}}{\cid}$, $R$ ranges over sets of random variables
$\elab{\flip{w}}{\cid}$, $M$ ranges over sets of message variables
$\elab{\mesg{w}}{\cid}$, $P$ ranges over sets of public variables
$\rvl{w}$, and $O$ ranges over sets of output variables $\out{\cid}$.
Given a program $\prog$, we write $\iov(\prog)$ to denote the
particular set $S \cup M \cup P \cup O$ of variables in $\prog$ and
$\secrets(\prog)$ to denote $S$, and we write $\flips(\prog)$ to
denote the particular set $R$ of random samplings in $\prog$. We write
$\vars(\prog)$ to denote $\iov(\prog) \cup \flips(\prog)$. For any set
of variables $X$ and clients $I$, we write $X_I$ to denote the subset
of $X$ owned by any client $\cid \in I$, in particular we write $X_H$
and $X_C$ to denote the subsets belonging to honest and corrupt
parties, respectively.
\end{definition}

\subsection{Semantics}
\label{section-lang-semantics}

\emph{Memories} are fundamental to the semantics of $\fedcat$ and
provide random tape and secret inputs to protocols, and also record
message sends, public broadcast, and client outputs.
\begin{definition}
  Memories $\store$
are finite (partial) mappings from variables $x$ to values $v \in
\mathbb{F}_p$.  The \emph{domain} of a memory is written
$\dom(\store)$ and is the finite set of variables on which the memory
is defined.  We write $\store\{ x \mapsto v\}$ for
$x\not\in\dom(\store)$ to denote the memory $\store'$ such that
$\store'(x) = v$ and otherwise $\store'(y) = \store(y)$ for all $y \in
\dom(\store)$. We write $\store \subseteq \store'$ iff $\dom(\store)
\subseteq \dom(\store')$ and $\store(x) = \store'(x)$ for all $x \in
\dom(\store)$. Given any $\store$ and $\store'$ with
$\store(x) = \store'(x)$ for all $x \in \dom(\store) \cap \dom(\store')$,
we write $\store \uplus \store'$ to denote the memory
with domain $X = \dom(\store) \cup \dom(\store')$ such
that:
$$
\forall x \in X .
(\store \uplus \store')(x) =
\begin{cases} \store(x) \text{\ if\ } x\in\dom(\store) \\ \store'(x) \text{\ otherwise\ }\end{cases} 
$$
\end{definition}
In our subsequent presentation we will often want to consider arbitrary
memories that range over particular variables and to restrict
memories to particular subsets of their domain:
\begin{definition}
  Given a set of variables $X$ and memory $\store$, we write
  $\store_X$ to denote the memory with $\dom(\store_X) = X$ and
  $\store_X(x) = \store(x)$ for all $x \in X$. We define $\mems(X)$ as
  the set of all memories with domain $X$:
  $$
  \mems(X) \defeq \{ \store \mid \dom(\store) = X \}
  $$
\end{definition}
So for example, given a protocol $\prog$, the set of all random tapes for
$\prog$ is $\mems(\flips(\prog))$, and the memory $\store_{\secrets(\prog)}$
is $\store$ restricted to the secrets in $\prog$.

Given a variable-free expression $\be$, we write $\cod{\be}$ to denote
the standard interpretation of $\be$ in the arithmetic field
$\mathbb{F}_{p}$. With the introduction of variables to expressions,
we need to interpret variables with respect to a specific memory, and
all variables used in an expression must belong to a specified client.
Thus, we denote interpretation of expressions $\be$ computed on a
client $\cid$ as $\lcod{\store,\be}{\cid}$. This interpretation is
defined in Figure \ref{fig-minifed}. The small-step reduction relation
$\redx$ is then defined in Figure \ref{fig-minifed} to evaluate
commands. Reduction is a relation on \emph{configurations} $(\store,
\prog)$ where all three command forms- message send, broadcast, and
output- are implemented as updates to the memory $\store$. We write
$\redxs$ to denote the reflexive, transitive closure of\ $\redx$.

\subsection{Example: Passive Secure Addition}
\label{section-lang-example}

Shamir addition leverages homomorphic properties of addition in
arithmetic fields to implement secret addition. If a field value $v_1$
is uniformly random, then $v_1 \fminus v_2$ is an encryption of $v_2$
where $v_1$ is an information theoretically secure one-time-pad, which
is exploited for secret sharing, noting that $v_2$ can be
reconstructed by summing $v_1$ and $v_3 \defeq v_1 \fminus v_2$. 

In $\minifed$, to privately sum secret values $\secret{\cid}$, each
client $\cid$ in the federation $\{ 1, 2, 3 \}$ samples a value
$\locflip$ that can be used as a one-time pad with another random
sample $\flip{x}$ and $\secret{\cid}$. This yields two secret shares
communicated as messages to the other clients, while each client keeps
$\locflip$ as its own share.
$$
\begin{array}{lll}
  \elab{\mesg{s1}}{2} &:=& \elab{(\secret{1} \fminus \locflip \fminus \flip{x})}{1} \\ 
  \elab{\mesg{s1}}{3} &:=& \elab{\flip{x}}{1} \\ 
  \elab{\mesg{s2}}{1} &:=& \elab{(\secret{2} \fminus \locflip \fminus \flip{x})}{2} \\ 
  \elab{\mesg{s2}}{3} &:=& \elab{\flip{x}}{2} \\ 
  \elab{\mesg{s3}}{1} &:=& \elab{(\secret{3} \fminus \locflip \fminus \flip{x})}{3} \\ 
  \elab{\mesg{s3}}{2} &:=& \elab{\flip{x}}{3}
\end{array}
$$
This scheme guarantees that messages
are viewed as random noise by any observer 
besides $\cid$ \cite{barthe2019probabilistic}. Next, each client
publicly reveals the sum of all of its shares, including its local
share. This step does reveal information about secrets-- note in
particular that $\locflip$ is reused and is no longer a one-time-pad:
$$
\begin{array}{lll}
  \rvl{1} &:=& \elab{(\locflip \fplus \mesg{s2} \fplus \mesg{s3})}{1} \\ 
  \rvl{2} &:=& \elab{(\mesg{s1} \fplus \locflip \fplus \mesg{s3})}{2} \\
  \rvl{3} &:=& \elab{(\mesg{s1} \fplus \mesg{s2} \fplus \locflip)}{3} 
\end{array}
$$
Finally, each client outputs the sum of each sum of shares, yielding
the sum of secrets. The protocol is correct because the outputs are all the
true sum of secrets, and it is secure because no more information about the
secrets other than that revealed by their sum is exposed.
$$
\begin{array}{lll}
  \out{1} &:=& \elab{(\rvl{1} \fplus \rvl{2} + \rvl{3})}{1}\\
  \out{2} &:=& \elab{(\rvl{1} \fplus \rvl{2} + \rvl{3})}{2}\\
  \out{3} &:=& \elab{(\rvl{1} \fplus \rvl{2} + \rvl{3})}{3}
\end{array}
$$

\section{Security Model}
\label{section-pmf}
\label{section-model}

MPC protocols are intended to implement some \emph{ideal
functionality} $\idealf$ with per-client outputs. In the $\minifed$
setting, Given a protocol $\prog$ that implements $\idealf$, with
$\iov(\prog) = S \cup M \cup P \cup O$, the domain of $\idealf$
is $\mems(S)$ and its range is $\mems(O)$.  Real/ideal security in the MPC
setting means that, given $\store \in \mems(S)$, a secure protocol
$\prog$ does not reveal any more information about honest secrets
$\store_H$ to parties in $C$ beyond what is implicitly declassified by
$\idealf(\sigma)$. Security comes in \emph{passive} and
\emph{malicious} flavors, wherein the adversary either follows the
rules or not, respectively. Characterization of both real world
protocol execution and simulation is defined
probabilistically. Following previous work
\cite{barthe2019probabilistic} we use probability mass functions to
express joint dependencies between input and output variables, as a
metric of information leakage.

\subsection{Probability Mass Functions} 

We define discrete joint probability mass functions (pmfs) in a
standard manner but develop some notations that are useful for our
presentation. Firstly, whereas distributions typically map
realizations of random variables to values in $[0..1]$, for
convenience and following \cite{barthe2019probabilistic} we use
memories to represent realizations, so for example given a pmf $\pmf$
over variables $\{ \sx{x}{1}, \mx{y}{2} \}$ we write $\pmf(\{
\elab{\secret{x}}{1} \mapsto 0, \elab{\mesg{y}}{2} \mapsto 1 \})$ to
denote the (joint) probability that $\elab{\secret{x}}{1} = 0 \wedge
\elab{\mesg{y}}{2} = 1$. Recall from Section
\ref{section-lang-semantics} that $\uplus$ denotes the combination of
memories, so for example $\{ \elab{\secret{x}}{1} \mapsto 0\} \uplus
\{\elab{\mesg{y}}{2} \mapsto 1 \} = \{ \elab{\secret{x}}{1} \mapsto 0,
\elab{\mesg{y}}{2} \mapsto 1 \}$.
\begin{definition}
  A \emph{probability mass function} $\pmf$ is a function
  mapping memories in $\mems(X)$ for given variables $X$ to
  values in $\mathbb{R}$ such that:
  $$
  \sum_{\store \in \mems(X)} \pmf(\store) \  = \ 1
  $$
\end{definition}
Now, we can define a notion of marginal and conditional
distributions as follows, which are standard for discrete
probability mass functions. 
\begin{definition}
  Given $\pmf$ with $\dom(\pmf) = \mems(X_2)$, the \emph{marginal distribution}
  of variables $X_1 \subseteq X_2$ in $\pmf$ is denoted $\margd{\pmf}{X_1}$ and defined as follows:
  $$
  \forall \store \in \mems(X_1) \quad . \quad \margd{\pmf}{X_1}(\store) \defeq
  \sum_{\store' \in \mems(X_2-X_1)} \pmf(\store \uplus \store')
  $$
\end{definition}

\begin{definition}
  Given $\pmf$, the \emph{conditional distribution}
  of $X_1$ given $X_2$ where $X_1 \cup X_2 \subseteq \dom(\pmf)$ and $X_1 \cap X_2 = \varnothing$
  is denoted $\condd{\pmf}{X_1}{X_2}$ and defined as follows:
  $$
  \forall \store \in \mems(X_1 \cup X_2)\ .\ 
  \condd{\pmf}{X_1}{X_2}(\store) \defeq
  \begin{cases}
    0 \text{\ if\ } \margd{\pmf}{X_2}(\store_{X_2}) = 0\\
    \margd{\pmf}{X_1 \cup X_2}(\store) / \margd{\pmf}{X_2}(\store_{X_2})\ \text{\ o.w.}
  \end{cases}
  $$
\end{definition}
We also define some convenient syntactic sugarings. The first will allow us to
compare marginal distributions under different realization conditions
(as in, e.g., Definition \ref{definition-NIMO}), the others are standard.
\begin{definition}
  Given $\pmf$, for all $\store_1 \in \mems(X_1)$ and $\store_2 \in \mems(X_2)$ define:
  \begin{enumerate}
  \item $\condd{\pmf}{X_1}{\store_2}(\store_1) \defeq \condd{\pmf}{X_1}{X_2}(\store_1 \uplus \store_2)$
  \item $\pmf(\store_1)  \defeq \margd{\pmf}{X_1}(\store_1)$ 
  \item $\pmf(\store_1|\store_2) \defeq \condd{\pmf}{X_1}{X_2}(\store_1 \uplus \store_2)$
  \end{enumerate}
\end{definition}

\subsection{Basic Distribution of a Protocol}
Now we can define the probability distribution of a program $\prog$,
that we denote $\progtt(\prog)$. Since $\fedcat$ is deterministic the
results of any run are determined by the input values together with
the random tape. And since we constrain programs to not overwrite
views, we are assured that \emph{final} memories contain both a
complete record of all initial secrets as well as views resulting from
communicated information. 

Our semantics require that random tapes contain values for all program
values $\elab{\flip{w}}{\cid}$ sampled from a uniform distribution
over $\mathbb{F}_p$. Input memories also contain input secret values
and possibly other initial view elements as a result of
pre-processing, e.g., Beaver triples for efficient multiplication,
and/or MACed share distributions as in BDOZ/SPDZ
\cite{evans2018pragmatic,10.1007/978-3-030-68869-1_3}. We define
$\runs(\prog)$ as the set of final memories resulting from execution
of $\prog$ given any initial memory, and treat all elements of
$\runs(\prog)$ as equally likely.  This establishes the basic program
distribution that can be marginalized and conditioned to quantify
input/output information dependencies.
\begin{definition}
  \label{def-progtt}
  \label{def-progd}
  \label{definition-progd}
  Given $\prog$ with $\secrets(\prog) = S$ and $\flips(\prog) = R$ and
  pre-processing predicate $\preproc$ on memories, define:
  $$
  \begin{array}{c}
    \runs(\prog) \defeq \\
    \{ \store \mid \exists \store_1 \in \mems(R) . 
    \exists \store_2 . \preproc(\store_2) \wedge
    (\store_1 \uplus \store_2,\prog) \redxs (\store,\varnothing) \}
  \end{array}
  $$
  By default, $\preproc(\store) \iff \dom(\store) = S$, i.e.,
  the initial memory contains all input secrets in a uniform
  marginal distribution. Then the \emph{basic distribution of $\prog$}, written $\progtt(\prog)$, is
  defined such that for all $\store \in \mems(\iov(\prog) \cup R)$:
  $$
  \progtt(\prog)(\store) =  1 / |\runs(\prog)| \ \text{if}\ \store \in \runs(\prog), \text{otherwise}\ 0
  $$
  
\end{definition}

\subsection{Honest and Corrupt Views}

Information about honest secrets can be revealed to corrupt clients
through messages sent from honest to corrupt clients, and through
publicly broadcast information from honest clients. Dually,
corrupt clients can impact protocol integrity through the messages
sent from corrupt to honest clients, and through publicly broadcast information
from corrupt clients. We call the former \emph{corrupt views}, and
the latter \emph{honest views}. Generally we let $V$ range over sets
of views.
\begin{definition}[Corrupt and Honest Views]
  We let $V$ range over \emph{views} which are sets of messages
  and reveals. Given a program $\prog$ with $\iov(\prog) = S \cup M \cup P \cup O$,
  define $\views(\prog) \defeq M \cup P$, and define $\houtputs$ as
  the messages and reveals in $V = M \cup P$ sent from honest to corrupt
  parties, called \emph{corrupt views}:
  $$
  \begin{array}{lcl}
    \houtputs & \defeq
        & \{\ \rvl{w} \mid\ \reveal{w}{\be}{\cid} \in \prog \wedge \cid \in H \ \}\ \cup \\
      & & \{\ \elab{\mesg{w}}{\cid}\ \mid\  \eassign{\mesg{w}}{\cid}{\be}{\cid'} \in
           \prog \wedge \cid \in C \wedge \cid' \in H \ \} 
  \end{array}
  $$
  and similarly define $\cinputs$ as the subset of $V$ sent from corrupt to honest
  parties, called \emph{honest views}:
  $$
  \begin{array}{lcl}
    \cinputs &  \defeq
        & \{\ \rvl{w} \mid\ \reveal{w}{\be}{\cid} \in \prog \wedge \cid \in C \ \} \ \cup\\
      & & \{\ \elab{\mesg{w}}{\cid}\ \mid\  \eassign{\mesg{w}}{\cid}{\be}{\cid'} \in
              \prog \wedge \cid \in H \wedge \cid' \in C \ \}
  \end{array}
  $$
\end{definition}

\subsection{Passive Correctness and Security}

In the passive setting we assume that $H$ and $C$ follow the
rules of protocols and share messages as expected. A first
consideration is whether a given protocol is \emph{correct}
with respect to an ideal functionality. 
\begin{definition}[Passive Correctness]
  We say that a protocol $\prog$ is \emph{passive correct} for a functionality
  $\idealf$ iff for all $\store \in \mems(\secrets(\prog))$
  we have $\progtt(\prog)(\idealf(\store) \mid \store) = 1$.
\end{definition}

In the passive setting the simulator must construct a probabilistic
algorithm $\SIM$, aka a \emph{simulation}, that is parameterized by
corrupt inputs and the output of an ideal functionality, and that
returns a reconstruction of corrupt views that is probabilistically
indistinguishable from the corrupt views in the real world protocol
execution.
\begin{definition}
  Given $\store$, and $v$,we write $ \prob(\SIM(\store,v) = \store') $
  to denote the probability that $\SIM(\store,v)$ returns corrupt views
  $\store'$ as a result. We write $\dist(\SIM(\store,v))$ to
  denote the distribution of corrupt views reconstructed by the
  simulation, where for
  all $\store' \in \mems(V)$:
  $$
  \dist(\SIM(\store,v))(\store')\ \defeq\ \prob(\SIM(\store,v) = \store') 
  $$
\end{definition}
Then we can define passive security in the real/ideal
model as follows. 
\begin{definition}[Passive Security]
  Assume given a program $\prog$ that correctly implements an ideal
  functionality $\idealf$, with $\views(\prog) = V$.  Then $\prog$
  is \emph{passive secure in the simulator model} iff there exists
  a simulation $\SIM$ such that for all
  partitions of the federation into honest and corrupt sets $H$ and $C$
  and for all $\store \in \mems(\secrets(\prog))$:
  $$
  \dist(\SIM(\store_{C},\idealf(\store))) = \condd{\progtt(\prog)}{\houtputs}{\store}
  $$
\end{definition}

\subsection{Malicious Security}

In the malicious model we assume that corrupt clients are in
thrall to an adversary $\adversary$ who does not necessarily follow
the rules of the protocol.  We model this by positing a $\arewrite$
function which is given a corrupt memory $\store_C$ and expression
$\be$, and returns a rewritten expression that can be interpreted to
yield a corrupt input. We define the evaluation relation that
incorporates the adversary in Figure \ref{fig-adversary}.

\adversaryfig

A key technical distinction of the malicious setting is that it
typically incorporates ``abort''. Honest parties implement strategies
to detect rule-breaking-- aka \emph{cheating}-- by using, e.g.,
message authentication codes with semi-homomorphic properties as in
BDOZ/SPDZ \cite{10.1007/978-3-030-68869-1_3}. If cheating is detected,
the protocol is aborted. To model this, we extend $\minifed$ with an
\ttt{assert} command and extend the range of memories with
$\bot$. Note that the adversary is free to ignore their own
assertions.
\begin{definition}
  We add assertions of the form $\elab{\assert{\phi(\be)}}{\cid}$ to the command
  syntax of $\minifed$, where $\phi$ is a decidable predicate on
  $\mathbb{F}_p$ and with operational semantics given in Figure
  \ref{fig-adversary}. We also extend the range of memories $\store$
  to $\mathbb{F}_p \cup \{ \bot \}$.
\end{definition}

It is necessary to add $\bot$ to the range of memories since
the possibility of abort needs to be reflected in adversarial
runs of a protocol. We can define $\aruns(\prog)$
as the ``prefix'' memories that result from possibly-aborting
protocols, but we also need to ``pad out'' the memories
of partial runs with $\bot$, as we define in $\botruns(\prog)$,
to properly reflect the contents of views and outputs even in case of abort. 
\begin{definition}
  \label{def-aprogd}
  \label{def-aprogtt}
  \label{definition-aprogd}
  Given program $\prog$ with $\iov(\prog) = S \cup V \cup O$ and $\flips(\prog) = R$, and
  any assumed pre-processing predicate $\preproc$ on memories, define:
  $$
  \begin{array}{c}
    \aruns(\prog) \defeq \\
    \{ \store \mid \exists \store_1 \in \mems(R) . 
    \exists \store_2 . \preproc(\store_2) \wedge
    (\store_1 \uplus \store_2,\prog) \aredxs (\store,\varnothing) \}
  \end{array}
  $$
  where by default, $\preproc(\store) \iff \dom(\store) = S$, and also define the following
  which pads out undefined views and outputs with $\bot$:
  $$
  \begin{array}{l}
    \botruns(\prog) \defeq \\
    \qquad \{ \store\{ x_1 \mapsto \bot, \ldots, x_n \mapsto \bot \} \mid \\
    \qquad \phantom{\{} \store \in \aruns(\prog) \wedge \{ x_1,\ldots,x_n\} = (V \cup O) - \dom(\store) \}
  \end{array}
  $$
  Then the \emph{$\adversary$ distribution of $\prog$}, written $\progtt(\prog,\adversary)$, is
  defined such that for all $\store \in \mems(\iov(\prog) \cup R)$:
  $$
  \progtt(\prog,\adversary)(\store) =  1 / |\botruns(\prog)| \ \text{if}\ \store \in \botruns(\prog), \text{otherwise}\ 0
  $$
\end{definition}

Given this preamble, we can define malicious simulation and malicious security
in a standard manner \cite{evans2018pragmatic}, as follows.
\begin{definition}[Malicious Simulation]
  Given a protocol $\prog$ with $\iov(\prog) = S \cup V \cup O$, honest and corrupt 
  clients $H$ and $C$, adversary $\adversary$, and honest inputs
  $\store \in \mems(S_H)$, the \emph{malicious simulation}  $\SIM(\store)$ has three phases:
  \begin{enumerate}
  \item In the first phase $\SIM_1$, $\adversary$ gives the
    simulator some $\store' \in \mems(S_C)$, and the simulator consults an
    oracle to compute $\idealf(\store \uplus \store') \in \mems(O)$.
  \item In the second phase $\SIM_2$, the simulator is given the corrupt
    outputs $\idealf(\store \uplus \store')_C$, which are again given to
    $\adversary$, who decides either to abort or not. If so, then the
    simulator is given $\sigma_{\mathit{out}} \defeq \{ \out{\cid} \mapsto \bot \mid \cid \in H \}$
    and arbitrary internal state $\varsigma$.
    Otherwise the simulator is given $\sigma_{\mathit{out}} \defeq \idealf(\store \uplus \store')_H$
    and $\varsigma$.
  \item In the third phase $\SIM_3$, given $\store_{\mathit{out}}$ and $\varsigma$, the simulator
    finally outputs
    $\store_{\mathit{out}} \uplus \store_{\mathit{views}}$ for some
    calculated $\store_{\mathit{views}} \in \mems(\houtputs)$.
  \end{enumerate}
\end{definition}

\begin{definition}[Malicious Security]
  We write $\dist(\SIM(\store))$ to
  denote the distribution of honest outputs and corrupt views reconstructed by the
  malicious simulation, where for
  all $\store'$:
  $$
  \dist(\SIM(\store))(\store')\ \defeq\ \prob(\SIM(\store) = \store') 
  $$
  Then a protocol $\prog$ with $\iov(\prog) = S \cup V \cup O$ is \emph{malicious
  secure} iff for all $H$, $C$, $\adversary$, and $\store \in \mems(S_H)$:
  $$
  \dist(\SIM(\store)) = \condd{\progtt(\prog,\adversary)}{\houtputs \cup O_H}{\store}
  $$  
\end{definition}

\section{Security Hyperproperties}
\label{section-hyper}

In this Section we formulate probabilistic versions of well-studied
hyperproperties of confidentiality and integrity, including
noninterference, gradual release, declassification, and robust
declassification.  We follow nomenclature developed in previous work
on characterizing declassification policies in deterministic settings
\cite{sabelfeld2009declassification}, but adapt them to our
probabilistic one.


\subsection{Conditional Noninterference}

Since MPC protocols release some information about secrets through
outputs of $\idealf$, they do not enjoy strict noninterference.  As
discussed in Section \ref{section-lang}, public reveals and protocol
outputs are fundamentally forms of declassification.  But consistent
with other work \cite{8429300}, we can formulate a version of
probabilistic noninterference conditioned on output that is sound
for passive security. 
\begin{definition}[Noninterference modulo output]
  \label{definition-NIMO}
  We say that a program $\prog$ with $\iov(\prog) = S \cup V \cup O$
  satisfies \emph{noninterference modulo output}
  iff for all $H$ and $C$ and $\store_1 \in \mems(S_C \cup O)$ and $\store_2 \in \mems(\houtputs)$
  we have:
  $$
  \condd{\progtt(\prog)}{S_H}{\store_1} = \condd{\progtt(\prog)}{S_H}{\store_1 \uplus \store_2}
 $$
\end{definition}
This conditional noninterference property implies that
corrupt views give the adversary no better chance of guessing honest
secrets than just the output and corrupt inputs do. So the simulator
can just arbitrarily pick any honest secrets that could have produced
the given outputs and run the protocol in simulation to reconstruct
real world corrupt views. This requires that the simulator can
tractably ``pre-image'' a given output of a functionality $\idealf$,
to determine the inputs that could have produced it. This equivalence
class is called a \emph{kernel} in recent work \cite{10.1145/3571740}.
\begin{definition}
  Given a functionality $\idealf$ and outputs $\store_{\mathit{out}}$, their 
  \emph{kernel}, denoted $\kernel{\idealf}{\store_{\mathit{out}}}$ is
  $
  \{ \store\ |\ \idealf(\store) = \store_{\mathit{out}} \}
  $.
  We say that $\idealf$ is \emph{pre-imageable} iff $\kernel{\idealf}{\store_{\mathit{out}}}$ for all
  $\store_{\mathit{out}}$ can be computed tractably.
\end{definition}
A soundness result for passive security can then be given as follows.
It is essentially the same as ``perfect passive NI security'' explored
in previous work \cite{8429300}.  
\begin{theorem}
  \label{theorem-nimo}
  Assume given pre-imageable $\idealf$ and a protocol $\prog$ that
  correctly implements $\idealf$.  If $\prog$ satisfies noninterference modulo output
  then $\prog$ is passive secure.
\end{theorem}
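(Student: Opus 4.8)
The plan is to exhibit an explicit passive simulator and show that its output distribution equals $\condd{\progtt(\prog)}{\houtputs}{\store}$ by reading noninterference modulo output (Definition \ref{definition-NIMO}) as a \emph{symmetric} conditional-independence statement. First I would define $\SIM$ as follows: given corrupt secrets $\store_{S_C} \in \mems(S_C)$ and an output $v = \idealf(\store) \in \mems(O)$, use pre-imageability to compute $\kernel{\idealf}{v}$ and select some $\store_H' \in \mems(S_H)$ with $\idealf(\store_H' \uplus \store_{S_C}) = v$; such a memory exists because the true honest secrets already witness it, and the kernel is tractably computable by assumption. The simulator then draws a uniform tape $\store_R \in \mems(\flips(\prog))$, runs $(\store_H' \uplus \store_{S_C} \uplus \store_R,\prog) \redxs (\store_f,\varnothing)$, and returns the corrupt views $(\store_f)_{\houtputs}$. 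Because under $\progtt(\prog)$ secrets and tape are independent and uniform (the run records both, so the map from (secrets, tape) to final memory is injective), fixing the secrets and sampling the tape realizes exactly the pushforward of the uniform tape through $\prog$, giving $\dist(\SIM(\store_{S_C},v)) = \condd{\progtt(\prog)}{\houtputs}{\store_H' \uplus \store_{S_C}}$. The appealing point is that \emph{any} kernel element $\store_H'$ works, so no distribution over honest secrets need be reconstructed.

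The core of the argument is a symmetry lemma: noninterference modulo output, namely $\condd{\progtt(\prog)}{S_H}{\store_1} = \condd{\progtt(\prog)}{S_H}{\store_1 \uplus \store_2}$ for all $\store_1 \in \mems(S_C \cup O)$ and $\store_2 \in \mems(\houtputs)$, is equivalent to
$$\condd{\progtt(\prog)}{\houtputs}{\store_H' \uplus \store_1} = \condd{\progtt(\prog)}{\houtputs}{\store_1}$$
for every $\store_H' \in \mems(S_H)$ whose conditioning event has positive probability. I would prove this by expanding the joint conditional $\progtt(\prog)(S_H{=}a,\houtputs{=}b \mid \store_1)$ two ways via the definition of $\condd{\cdot}{\cdot}{\cdot}$, substituting the NIMO equality into one factorization, and cancelling the common factor $\condd{\progtt(\prog)}{S_H}{\store_1}$ where it is nonzero; the measure-zero cases are absorbed by the convention in the definition of $\condd{\cdot}{\cdot}{\cdot}$ that returns $0$ when the conditioning marginal vanishes.

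With the lemma in hand I would finish using passive correctness twice. Since $\prog$ correctly implements $\idealf$, conditioning on any full secret memory forces its output, i.e. $\progtt(\prog)(\idealf(\sigma) \mid \sigma) = 1$, so conditioning on $\store_H' \uplus \store_{S_C}$ coincides with additionally conditioning on $v$, and the same holds for the true secrets $\store = \store_H \uplus \store_{S_C}$. Applying the symmetry lemma to both honest choices yields
$$\dist(\SIM(\store_{S_C},v)) = \condd{\progtt(\prog)}{\houtputs}{\store_H' \uplus \store_{S_C} \uplus v} = \condd{\progtt(\prog)}{\houtputs}{\store_{S_C} \uplus v} = \condd{\progtt(\prog)}{\houtputs}{\store},$$
which is exactly the passive-security equation; as $H$, $C$, and $\store$ were arbitrary, $\prog$ is passive secure.

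I expect the main obstacle to be bookkeeping rather than conceptual. The delicate steps are justifying that ``fix honest secrets, sample the tape'' genuinely realizes $\condd{\progtt(\prog)}{\houtputs}{\store_H' \uplus \store_{S_C}}$ under the uniform-input convention of Definition \ref{definition-progd}, and tracking the support/positivity side conditions so that every conditional distribution in the chain is well defined and disjointness requirements of $\condd{\cdot}{\cdot}{\cdot}$ are met. The symmetry step itself is routine Bayesian manipulation, but it is the linchpin that converts the ``adversary cannot better guess honest secrets'' reading of NIMO into the ``corrupt views are reconstructible from output and corrupt inputs alone'' form that makes the simulator correct.
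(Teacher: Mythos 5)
Your proposal is correct and takes essentially the same route the paper intends: the paper offers no explicit proof of Theorem \ref{theorem-nimo}, only the preceding remark that the simulator can ``arbitrarily pick any honest secrets that could have produced the given outputs and run the protocol in simulation,'' which is exactly your construction via pre-imageability of $\kernel{\idealf}{v}$. Your symmetry lemma (reading Definition \ref{definition-NIMO} as conditional independence of $S_H$ and $\houtputs$ given $S_C \cup O$, then flipping it) and the two applications of passive correctness to absorb the output into the conditioning are precisely the bookkeeping needed to make that one-line sketch rigorous.
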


\subsection{Gradual Release}

Probabilistic noninterference is related to perfect secrecy and is
preserved by components of cryptographic protocols generally. It can
be expressed using probabilistic independence, aka separation,
\cite{darais2019language,barthe2019probabilistic}, and we adopt the
following notation to express independence:
\begin{definition}
  We write $\sep{\pmf}{X_1}{X_2}$ iff for all
    $\store \in \mems(X_1 \cup X_2)$ we have
  $\margd{\pmf}{X_1 \cup X_2}(\store) =
  \pmf(\store_{X_1}) * \pmf(\store_{X_2})$
\end{definition}

In practice, MPC protocols typically satisfy a \emph{gradual
release} property \cite{sabelfeld2009declassification}, where messages
exchanged remain probabilistically separable from secrets, with only
declassification events (reveals and outputs) releasing information
about honest secrets.  A key difference is that while these
declassification events essentially define the policy in gradual
release, the ideal functionality sets the release policy for MPC
passive security, so its necessary to show that declassification
events respect these bounds.
\begin{definition}
  Given $H,C$, a protocol $\prog$ with $\iov(\prog) = S \cup M \cup P \cup O$
  satisfies \emph{gradual release} iff
  $\sep{\progtt(\prog)}{M_C}{S_H}$.
\end{definition}

\subsection{Integrity and Robust Declassification}


\emph{Integrity} is an important hyperproperty in security models that admit
malicious adversaries. Consistent with formulations in deterministic settings,
we have already defined protocol confidentiality as the preservation of low equivalence
(of secrets and views), and now we define protocol integrity as the preservation
of high equivalence (of secrets and views). Intuitively, this property says
that any adversarial strategy either ``mimics'' a passive strategy with some
choice of inputs or causes an abort.
\begin{definition}[Integrity]
  \label{def-integrity}
  We say that a protocol $\prog$ with $\iov(\prog) = S \cup V \cup O$ has
  \emph{integrity} iff for all $H$, $C$, and $\adversary$,
  if $\store \in \aruns(\prog)$ 
  then there exists $\store' \in \mems(S)$ with $\store_{S_H} = \store'_{S_H} $ and:
    $$
    \condd{\progtt(\prog,\adversary)}{X}{\store_{S_H \cup \cinputs}} =
    \condd{\progtt(\prog)}{X}{\store'}
    $$ 
  where $X \defeq (\houtputs \cup O_H) \cap \dom(\store)$. 
\end{definition}
A first important observation is that integrity preserves protocol correctness
for honest outputs, except for the possibility of abort. 
\begin{lemma}
  \label{lemma-malicious-correct}
  If a protocol $\prog$ with $\iov(\prog) = S \cup V \cup O$ is passive correct for
  $\idealf$ and
  has integrity, then for all $H$, $C$, $\adversary$, $\store_1 \in \mems(S_H)$,
  $\ox{\cid} \in (O_H)$, and $\mv \in \mathbb{F}_p$, if:
  $$
  \progtt(\prog,\adversary)(\{ \ox{\cid} \mapsto \mv \} \mid \store_1) > 0
  $$
  then exists $\store_2 \in \mems(S_C)$ such that:
  $$
  \idealf(\store_1 \uplus \store_2)(\ox{\cid}) = \mv
  $$
\end{lemma}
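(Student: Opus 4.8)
The plan is to chain the two hypotheses through the positivity of the given conditional probability: integrity lets us move from the adversarial distribution $\progtt(\prog,\adversary)$ to the passive distribution $\progtt(\prog)$, and passive correctness then pins the honest output to the ideal functionality. First I would unpack the hypothesis. Since $\progtt(\prog,\adversary)$ is uniform over $\botruns(\prog)$, the assumption $\progtt(\prog,\adversary)(\{\ox{\cid} \mapsto \mv\} \mid \store_1) > 0$ forces the numerator $\margd{\progtt(\prog,\adversary)}{\{\ox{\cid}\} \cup S_H}(\{\ox{\cid}\mapsto\mv\} \uplus \store_1)$ to be positive, so there is a padded run $\store \in \botruns(\prog)$ with $\store_{S_H} = \store_1$ and $\store(\ox{\cid}) = \mv$. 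Crucially $\mv \in \mathbb{F}_p$, so $\mv \neq \bot$, and hence $\ox{\cid}$ was not padded: letting $\hat{\store} \in \aruns(\prog)$ be the un-padded run underlying $\store$, we have $\ox{\cid} \in \dom(\hat{\store})$ and $\hat{\store}(\ox{\cid}) = \mv$.

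Next I would invoke integrity on $\hat{\store} \in \aruns(\prog)$. This yields some $\store' \in \mems(S)$ with $\store'_{S_H} = \hat{\store}_{S_H} = \store_1$ and $\condd{\progtt(\prog,\adversary)}{X}{\hat{\store}_{S_H \cup \cinputs}} = \condd{\progtt(\prog)}{X}{\store'}$ for $X = (\houtputs \cup O_H) \cap \dom(\hat{\store})$; note $\ox{\cid} \in O_H \cap \dom(\hat{\store})$, so $\ox{\cid} \in X$. Since $\hat{\store}$ is an actual adversarial run, the adversarial conditional on the left is strictly positive at the realization $\hat{\store}_X$, so the passive conditional $\condd{\progtt(\prog)}{X}{\store'}$ is positive there too. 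Marginalizing away the remaining coordinates of $X$ (which preserves positivity) gives $\condd{\progtt(\prog)}{\{\ox{\cid}\}}{\store'}(\{\ox{\cid} \mapsto \mv\}) > 0$.

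Finally I would apply passive correctness. Since $\prog$ is passive correct for $\idealf$, $\progtt(\prog)(\idealf(\store') \mid \store') = 1$, so conditioned on secrets $\store'$ the output distribution is a point mass at $\idealf(\store')$; in particular the $\ox{\cid}$-marginal conditioned on $\store'$ is concentrated at $\idealf(\store')(\ox{\cid})$. Combined with the positivity just established, this forces $\mv = \idealf(\store')(\ox{\cid})$. Taking $\store_2 \defeq \store'_{S_C}$ and using that $S_H$ and $S_C$ partition $S$, we get $\store_1 \uplus \store_2 = \store'$, whence $\idealf(\store_1 \uplus \store_2)(\ox{\cid}) = \mv$, as required.

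The main obstacle I anticipate is the bookkeeping around the $\botruns$/$\aruns$ padding together with the conditional-distribution sugar: I must verify that the realization at which I evaluate both sides of the integrity equation genuinely lies in the support, so that positivity transfers from the adversarial side to the passive side, and that restricting the joint conditional over $X$ down to the single coordinate $\ox{\cid}$ is legitimate. The step using $\mv \neq \bot$ to guarantee $\ox{\cid} \in X$ is essential and is precisely where the ``except for the possibility of abort'' caveat of the lemma is consumed.
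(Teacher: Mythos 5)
Your proof is correct, but there is nothing in the paper to compare it against: the paper states Lemma \ref{lemma-malicious-correct} as a bare ``observation'' with no proof, and then invokes it inside the proof of Theorem \ref{theorem-integrity}. Your argument supplies the missing proof, and its three-step structure is exactly what the lemma needs: (i) positivity of $\progtt(\prog,\adversary)(\{\ox{\cid}\mapsto\mv\}\mid\store_1)$ exhibits a padded run in $\botruns(\prog)$ agreeing with $\store_1$ and assigning $\mv$ to $\ox{\cid}$, and the observation that $\mv\in\mathbb{F}_p$ (so $\mv\neq\bot$) guarantees the underlying run $\hat{\store}\in\aruns(\prog)$ actually defines $\ox{\cid}$, hence $\ox{\cid}\in X=(\houtputs\cup O_H)\cap\dom(\hat{\store})$---this is precisely where the ``except for abort'' caveat is consumed; (ii) the integrity equation, read as an equality of functions on $\mems(X)$, transfers strict positivity at the realization $\hat{\store}_X$ from the adversarial conditional to the passive conditional $\condd{\progtt(\prog)}{X}{\store'}$, and marginalizing that conditional down to the single coordinate $\ox{\cid}$ preserves positivity; (iii) passive correctness makes the passive conditional on outputs a point mass at $\idealf(\store')$, forcing $\mv=\idealf(\store')(\ox{\cid})$, and $\store_2\defeq\store'_{S_C}$ closes the argument since $\store'_{S_H}=\hat{\store}_{S_H}=\store_1$ and $S_H$, $S_C$ partition $S$. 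One wrinkle you should flag is inherited from the paper rather than introduced by you: Definition \ref{def-integrity} conditions on $\store_{S_H\cup\cinputs}$ for $\store\in\aruns(\prog)$, but on aborted runs some variables in $\cinputs$ may be undefined, so this restriction is not literally well-formed; your proof (like the paper's own use of the lemma) implicitly reads it as restriction to the defined portion, which is the only sensible interpretation.
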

The following result establishes that integrity implies malicious
security for protocols that are passive secure (which also subsumes
correctness). 
\begin{theorem}
  \label{theorem-integrity}
  If a protocol is passive secure and has integrity, then it
  is malicious secure.
\end{theorem}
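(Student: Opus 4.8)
The plan is to construct the required malicious simulator by composing the passive simulator guaranteed by passive security with the effective-input extraction guaranteed by integrity. Fix $H$, $C$, an adversary $\adversary$, and honest inputs $\store \in \mems(S_H)$; passive security hands us a passive simulator, call it $\SIM_p$, with $\dist(\SIM_p(\store''_{S_C},\idealf(\store''))) = \condd{\progtt(\prog)}{\houtputs}{\store''}$ for every $\store'' \in \mems(S)$. I would instantiate the three-phase malicious simulator $\SIM(\store)$ as follows: in $\SIM_1$, let $\adversary$ declare corrupt inputs $\rho \in \mems(S_C)$ and query the oracle for $\idealf(\store \uplus \rho)$; in $\SIM_2$, forward the corrupt outputs $\idealf(\store \uplus \rho)_C$ and record $\adversary$'s abort decision, receiving $\store_{\mathit{out}}$ (all-$\bot$ honest outputs on abort, else $\idealf(\store \uplus \rho)_H$); and in $\SIM_3$, invoke $\SIM_p(\rho,\idealf(\store \uplus \rho))$ to compute corrupt views $\store_{\mathit{views}} \in \mems(\houtputs)$ and output $\store_{\mathit{out}} \uplus \store_{\mathit{views}}$.

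To verify $\dist(\SIM(\store)) = \condd{\progtt(\prog,\adversary)}{\houtputs \cup O_H}{\store}$, I would first refine the right-hand side by conditioning additionally on the honest views $\cinputs$ (the corrupt-to-honest messages $\adversary$ actually produces) and on the abort status of the run. For each adversarial run reaching a memory $\store^*$, integrity (Definition \ref{def-integrity}) supplies an effective input $\store' \in \mems(S)$ with $\store'_{S_H} = \store$ and
$$
\condd{\progtt(\prog,\adversary)}{X}{\store^*_{S_H \cup \cinputs}}
= \condd{\progtt(\prog)}{X}{\store'},
\qquad X \defeq (\houtputs \cup O_H) \cap \dom(\store^*),
$$
which replaces the malicious conditional by a passive conditional at $\store'$. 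The corrupt component $\store'_{S_C}$ of integrity's effective input is precisely the input $\rho$ that phase $\SIM_1$ extracts from $\adversary$, so the ideal output $\idealf(\store') = \idealf(\store \uplus \rho)$ is exactly what the simulator has on hand.

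It remains to match the passive conditional $\condd{\progtt(\prog)}{X}{\store'}$ against the simulator's output, splitting $X$ into its honest-output part $O_H$ and its corrupt-view part $\houtputs$. On the output part, Lemma \ref{lemma-malicious-correct} together with passive correctness pins the honest outputs to $\idealf(\store')_H$ on non-aborting runs and to $\bot$ on aborting runs; this is exactly the $\store_{\mathit{out}}$ fixed in $\SIM_2$ and the $\bot$-padding built into $\botruns(\prog)$ in Definition \ref{definition-aprogd}. On the corrupt-view part, since the honest outputs are deterministic given $\store'$, the conditional factors as $\condd{\progtt(\prog)}{\houtputs}{\store'}$, which passive security reconstructs exactly as $\dist(\SIM_p(\rho,\idealf(\store \uplus \rho)))$ — the distribution produced in $\SIM_3$. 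Marginalizing back over the honest views $\cinputs$ and the abort decision then recovers $\condd{\progtt(\prog,\adversary)}{\houtputs \cup O_H}{\store}$, establishing malicious security.

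The hardest part will be the bookkeeping that glues these per-run equalities into a single equality of distributions. I expect the delicate step to be showing that the effective corrupt secrets $\store'_{S_C}$ that integrity associates with a run agree distributionally with the corrupt inputs $\rho$ the adversary declares in $\SIM_1$, so that one $\SIM_p$ invocation serves both the integrity-derived passive conditional and the simulator's reconstruction, and that the adversary's interaction induces the same mixing weights over $\cinputs$ and abort in the simulated and real executions. In parallel, I must check that the restriction $X = (\houtputs \cup O_H) \cap \dom(\store^*)$ and the $\bot$-padding of $\botruns$ line up, so that aborted and completed runs are each accounted for exactly once when summing over $\cinputs$.
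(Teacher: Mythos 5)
Your proposal is correct and takes essentially the same approach as the paper's proof: integrity supplies an effective passive input $\store'$ agreeing with the honest secrets, the three-phase simulator extracts its corrupt part from the adversary in phase one (the paper handles the ``delicate step'' you flag by the same move, namely having $\adversary$ hand the simulator the inputs its strategy is impersonating), forwards the ideal outputs and the abort decision in phase two, and matches the corrupt views via passive security in phase three, with Lemma \ref{lemma-malicious-correct} pinning down the honest outputs. The only divergence is cosmetic: in phase three you invoke the passive simulator as a black box on $(\rho, \idealf(\store \uplus \rho))$, whereas the paper's simulator runs $\prog$ itself on an arbitrary pre-image drawn from $\kernel{\idealf}{\idealf(\store')}$; your variant sidesteps the need for $\idealf$ to be pre-imageable, but both rest on the identical final appeal to passive security.
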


\begin{proof}
  Let $\prog$ be some protocol with passive security and integrity
  where $\iov(\prog) = S \cup V \cup O$, and let $\adversary$ be some
  adversary. Suppose $\store \in \aruns(\prog)$.
  As integrity requires, and as Lemma \ref{lemma-malicious-correct}
  demonstrates with respect to outputs, the most the adversary can do
  in the presence of integrity is to elicit the same responses from
  the honest parties-- via the strategy $\store_{\cinputs}$-- as
  are elicited from some passive run of the protocol using
  some $\store' \in \mems(S)$ where $\store'_H = \store_{S_H}$,
  and perhaps to force an abort after some number of message
  exchanges.

  Therefore, in simulation, $\adversary$ can provide the simulator
  with some $\store'_C$ in $\SIM_1$ which its strategy is ``impersonating'',
  allowing $\idealf(\store')$ to
  be communicated to $\adversary$ in $\SIM_2$ who can then
  decide whether or not to abort. In the case of abort, the
  subset of $\houtputs$ to be defined can be communicated to
  $\SIM_3$, along with $\idealf(\store')$, via $\Sigma$. 
  In $\SIM_3$, the simulator can then run $\prog$ in simulation
  with inputs $\store'_C$ and arbitrary $\store'' \in \mems(S_H)$
  such that $\store'_C \uplus \store'' \in \kernel{\idealf}{\idealf(\store')}$.
  The assumption of passive security of $\prog$ implies the result.
\end{proof}

The hyperproperty of robust declassification \cite{930133} similarly
combines a confidentiality property with integrity to establish that
malicious actors cannot declassify more information than is intended
by policy. But in this prior work, this policy is established
by the declassifications themselves, as in gradual release.
Thus, we can define a robust declassification property as follows. 
\begin{definition}[Robust Declassification]
  A protocol satisfies \emph{robust declassification} iff it has integrity and
  satisfies gradual release. 
\end{definition}
However, it is important to note that gradual release, and
hence robust declassification, are not sufficient to establish
passive or malicious simulator security, where the declassification
policy is established by the ideal functionality $\idealf$. 
\begin{theorem}
  Robust declassification does not imply malicious security, but
  passive security with robust declassification implies malicious security.
\end{theorem}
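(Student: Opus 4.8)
This statement has two parts: a negative result, that robust declassification by itself does not entail malicious security, and a positive result, that adding passive security recovers it. The plan is to dispatch the positive direction immediately by unfolding definitions, and to devote the real effort to building and checking a counterexample for the negative direction.

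For the positive direction I would simply recall that, by definition, robust declassification is the conjunction of integrity (Definition~\ref{def-integrity}) and gradual release. Hence a protocol that is both passive secure and robustly declassifying is in particular passive secure and has integrity, and Theorem~\ref{theorem-integrity} applies verbatim to yield malicious security. The gradual release conjunct plays no role here; this direction is a one-line corollary of Theorem~\ref{theorem-integrity}.

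For the negative direction the guiding observation is that robust declassification never refers to the ideal functionality $\idealf$: its effective release policy is pinned down by the declassification events themselves, whereas malicious security demands that corrupt views and honest outputs be reconstructible by the simulator from $\idealf$ alone. I would exploit this gap with a protocol whose reveals disclose strictly more than $\idealf$ sanctions. Concretely, take $H = \{1\}$, $C = \{2\}$, let $\idealf$ map every secret memory to the constant $\{\out{1} \mapsto 0, \out{2} \mapsto 0\}$, and consider
\[
\reveal{1}{\secret{1}}{1};\ \pubout{1}{0}{1};\ \pubout{2}{0}{2}.
\]
Because the protocol contains no message sends we have $M_C = \varnothing$, so gradual release $\sep{\progtt(\prog)}{M_C}{S_H}$ holds vacuously; and because $\cinputs = \varnothing$ and the adversary can touch only the corrupt output $\out{2} \notin \houtputs \cup O_H$, every honest view and output is fixed independently of $\adversary$, so integrity holds with witness $\store' = \store$. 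Thus the protocol satisfies robust declassification (and is even passive correct for $\idealf$, since both outputs are always $0$).

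To close, I would show malicious security fails. The honest reveal $\rvl{1} \in \houtputs$ satisfies $\rvl{1} = \store(\elab{\secret{1}}{1})$ in every run, so the real distribution $\condd{\progtt(\prog,\adversary)}{\houtputs \cup O_H}{\store}$ concentrates on the honest secret. The simulator, however, sees $\store$ only through the oracle value $\idealf(\store \uplus \store')$, which is constant and hence carries no information about $\store(\elab{\secret{1}}{1})$; therefore $\dist(\SIM(\store))$ coincides for two honest inputs that differ in $\elab{\secret{1}}{1}$ and cannot equal both real distributions. The main obstacle is not this final contradiction but the bookkeeping around it: the counterexample must be simultaneously strong enough to defeat the simulator and degenerate enough (no messages, no adversarial leverage over honest parties) that the integrity and separation conditions of robust declassification hold for free. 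Verifying those two conditions against their formal definitions is where care is required.
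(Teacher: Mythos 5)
Your proposal is correct, and it in fact supplies more than the paper does: the paper states this theorem \emph{without} proof, offering only the preceding remark that gradual release pins the declassification policy to the declassification events themselves rather than to $\idealf$. Your positive direction is exactly the intended argument---robust declassification contains integrity by definition, so Theorem \ref{theorem-integrity} applies verbatim and the gradual-release conjunct is never used. Your negative direction turns the paper's informal remark into a concrete witness: a protocol whose public reveal discloses the honest secret while the (constant) ideal functionality discloses nothing, so the real distribution of $\houtputs \cup O_H$ determines the secret, yet every quantity the simulator can access---the corrupt inputs, the oracle value $\idealf(\store \uplus \store')$, the abort decision---is independent of it. That is precisely the gap the paper gestures at, made rigorous; note also that the reveal escapes gradual release exactly because $\rvl{1} \in P$ rather than $M$, which is the structural loophole the paper's remark is about.

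One point needs tightening. Integrity (Definition \ref{def-integrity}) and gradual release quantify over \emph{all} partitions of the federation into $H$ and $C$, but your verification fixes $H = \{1\}$, $C = \{2\}$. Gradual release is unaffected (the protocol has no messages under any partition), but for integrity you must also handle $C = \{1\}$, where the adversary controls the revealer and can forge $\rvl{1}$ arbitrarily. The check still goes through: there $\houtputs = \varnothing$, $O_H = \{\out{2}\}$, and $\cinputs = \{\rvl{1}\}$, so the integrity equation conditions on the (possibly forged) value of $\rvl{1}$, and the only variable in $X$ is $\out{2}$, which equals $0$ with probability one in both the adversarial and the passive distributions; the partitions $C = \varnothing$ and $C = \{1,2\}$ are trivial. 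By contrast, the failure of malicious security need only be exhibited for a single partition, so your closing argument is fine as stated.
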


\section{Automated Verification in $\mathbb{F}_2$}
\label{section-bruteforce}

In the binary field $\mathbb{F}_2$, a brute force strategy for computing
$\progtt(\prog)$ for any $\prog$ is to directly compute
$\runs(\prog)$. By querying $\progtt(\prog)$ we can verify any of the
hyperproperties discussed previously, or other properties such as
perfect secrecy \cite{barthe2019probabilistic}.  A basic method to do this is
to calculate a truth table for the given protocol. Since
$\runs(\prog)$ is exponential in
the size of $\secrets(\prog) \cup \flips(\prog)$, this strategy is
feasible only for smaller protocols. However, to support some
scaling we can efficiently
convert protocols in the passive setting (without $\ttt{assert}$) to
stratified Datalog programs, and then extract $\runs(\prog)$ by
calculating Least Herbrand models by parallelization and/or other HPC
acceleration techniques for logic programs \cite{aspis2018linear}.
The rewriting we describe here is to Datalog with negation, with a
negation-as-failure model, though we can also use techniques in
\cite{sakama2017linear} to eliminate negation from resulting programs.

\subsection{Computing a Truth Table}

Letting $\stores$ denote sets of memories, in Figure \ref{fig-solve}
we define the algorithm $\solve{\stores}{\be}{\cid}$ which filters a
given $\stores$ to obtain the subset whose elements satisfy $\be$. In
this Figure and elsewhere we use logical connectives as field
operations ($\eand$ and $\exor$ for $*$ and $+$ respectively) in
$\mathbb{F}_2$ and add $\enot$ and $\eor$ as trivial but convenient
extensions. The correctness of this operation is characterized as
follows.
\begin{lemma}
  \label{lemma-solves}
  $(\solve{\stores}{\be}{\cid}) = \{ \store \in \stores \ \mid\ \lcod{\store,\be}{\cid} = 1 \}$.
\end{lemma}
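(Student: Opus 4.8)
The plan is to proceed by structural induction on the expression $\be$, showing for each syntactic form that a memory $\store \in \stores$ lies in $\solve{\stores}{\be}{\cid}$ if and only if $\lcod{\store,\be}{\cid} = 1$. Before starting I would fix the field semantics of the logical connectives, since Figure \ref{fig-solve} is written in terms of $\enot$, $\eand$, $\eor$, and $\exor$ whereas the interpretation $\lcod{\cdot}{\cid}$ of Figure \ref{fig-minifed} is given for the field operations $\fplus$, $\fminus$, $\ftimes$. Following the convention stated in the text, I read $\eand$ as $\ftimes$ and $\exor$ as $\fplus$ in $\mathbb{F}_2$, and take $\enot\ \be$ to denote $1 \fminus \be$ and $\be_1 \eor \be_2$ to denote $\be_1 \fplus \be_2 \fplus (\be_1 \ftimes \be_2)$. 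Since the whole section works in $\mathbb{F}_2$, every subexpression evaluates to a value in $\{0,1\}$, so the condition $\lcod{\store,\be}{\cid} = 1$ is just the assertion that $\be$ is true under $\store$.

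For the base cases the constants are immediate: $\lcod{\store,1}{\cid} = 1$ for every $\store$, matching $\solve{\stores}{1}{\cid} = \stores$, while $\lcod{\store,0}{\cid}$ is never $1$, matching $\solve{\stores}{0}{\cid} = \varnothing$. For each variable form the interpretation clause of Figure \ref{fig-minifed} simply looks the variable up in $\store$; thus $\lcod{\store,\mesg{w}}{\cid} = 1$ unfolds to $\store(\elab{\mesg{w}}{\cid}) = 1$, which is exactly the filtering condition in the corresponding $\solve$ clause, and the $\flip{w}$, $\secret{w}$, and $\rvl{w}$ cases are identical.

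The inductive cases apply the induction hypothesis to the immediate subexpressions and then match the set operation against the Boolean reading of the field operation. For $\enot\ \be$ we have $\lcod{\store,\enot\ \be}{\cid} = 1$ iff $\lcod{\store,\be}{\cid} = 0$ iff, by the IH, $\store \notin \solve{\stores}{\be}{\cid}$, which for $\store \in \stores$ is precisely $\store \in \stores - \solve{\stores}{\be}{\cid}$. For $\be_1 \eand \be_2$, since $a \ftimes b = 1$ in $\mathbb{F}_2$ iff $a = 1$ and $b = 1$, truth of the conjunction corresponds by the IH to membership in the intersection, and $\eor$ is the dual case giving union. The $\exor$ case is the one requiring care: $a \fplus b = 1$ in $\mathbb{F}_2$ iff exactly one of $a,b$ equals $1$, so $\store$ satisfies $\be_1 \exor \be_2$ iff it satisfies exactly one of $\be_1$, $\be_2$; unfolding the IH, this is precisely the two-way disjoint union on the right-hand side of the $\exor$ clause.

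The main obstacle is bookkeeping rather than depth: one must verify that the set-theoretic constructions of Figure \ref{fig-solve} --- intersection, union, relative complement, and the two disjoint cross terms used for $\exor$ --- agree with the corresponding $\mathbb{F}_2$ truth functions, and in particular get the partition in the $\exor$ clause right. A clean way to discharge all the propositional-connective cases uniformly is to observe that, by the IH, membership of $\store$ in $\solve{\stores}{\be_i}{\cid}$ is exactly the indicator of $\lcod{\store,\be_i}{\cid} = 1$; each $\solve$ clause is then just the standard set-operation implementation of the associated two-valued truth table, and checking the (at most four) rows of each table suffices.
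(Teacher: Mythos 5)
Your proof is correct: structural induction on $\be$, with the connectives read as their $\mathbb{F}_2$ field counterparts, is exactly the routine argument this lemma calls for. Note that the paper itself states Lemma \ref{lemma-solves} without proof, treating it as immediate from the definitions in Figures \ref{fig-minifed} and \ref{fig-solve}; your induction (including the careful handling of the $\exor$ partition and the implicit fact that $\solvealg$ always returns a subset of $\stores$) is precisely the argument the authors leave to the reader.
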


\solvefig

We immediately note that $\runs(\prog)$ can be obtained by a left-folding
of $\solvealg$ across $\prog$. 
\begin{lemma}
  \label{lemma-cruns}
  Given $\prog$ where $\iov(\prog) = S \cup V \cup O$ and $\flips(\prog) = R$. Define:
  \begin{eqnarray*}
    {tt}\ \ \stores\ (\xassign{x}{\be}{\cid}) &\defeq& \begin{array}{l}
      \mathrm{let}\ \stores' = \solve{\stores}{\be}{\cid}\ \mathrm{in}\\
      \ \ \{\extend{\store}{x}{1} \mid \store \in \stores' \}\ \cup\\
      \ \ \{\extend{\store}{x}{0} \mid \store \in \stores - \stores' \}\end{array}
  \end{eqnarray*}
  Then assuming default preprocessing, $\mathit{foldl}\ {tt}\ \mems(S \cup R)\ \prog = \runs(\prog)$.
\end{lemma}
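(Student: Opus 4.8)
The plan is to reduce the statement to a per-command correspondence between the truth-table operator $tt$ and single-step reduction, and then lift this to whole programs by induction on the length of $\prog$. Concretely, for a command $\instr = \xassign{x}{\be}{\cid}$ define the memory transformer $f_\instr(\store) \defeq \extend{\store}{x}{\lcod{\store,\be}{\cid}}$, which is exactly the effect of one reduction step $(\store,\instr;\prog') \redx (f_\instr(\store),\prog')$ read off from Figure \ref{fig-minifed}. The key lemma I would establish first is that $tt$ computes the image of $f_\instr$ pointwise over a set of memories:
$$
tt\ \stores\ \instr = \{\, f_\instr(\store) \mid \store \in \stores \,\}.
$$

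This single-step claim is where the binary field $\mathbb{F}_2$ is essential. By Lemma \ref{lemma-solves}, $\stores' = \solve{\stores}{\be}{\cid} = \{\store \in \stores \mid \lcod{\store,\be}{\cid} = 1\}$, so every $\store \in \stores'$ satisfies $f_\instr(\store) = \extend{\store}{x}{1}$. For the complement $\stores - \stores'$ I would use the dichotomy that in $\mathbb{F}_2$ any value that is not $1$ equals $0$, giving $f_\instr(\store) = \extend{\store}{x}{0}$ for each such $\store$. Matching these two cases against the two set-comprehensions in the definition of $tt$ yields the claimed equality. (In a larger field this step fails, which is exactly why the automated method is restricted to $\mathbb{F}_2$.)

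Next I would unroll the fold. Writing $\prog = \instr_1;\cdots;\instr_n$ and $\stores_0 = \mems(S \cup R)$, $\stores_i = tt\ \stores_{i-1}\ \instr_i$, the single-step claim gives $\stores_i = \{ f_{\instr_i}(\store) \mid \store \in \stores_{i-1}\}$, so by induction on $i$ the fold computes $\stores_n = \{ (f_{\instr_n} \circ \cdots \circ f_{\instr_1})(\store) \mid \store \in \mems(S \cup R)\}$ (the base case $n=0$ being $\stores_0 = \mems(S\cup R)$, which matches $\runs(\varnothing)$). On the reduction side the single head-reduction rule of Figure \ref{fig-minifed} is deterministic and processes commands left to right, so for each initial $\store$ there is a unique derivation $(\store,\prog) \redxs ((f_{\instr_n}\circ\cdots\circ f_{\instr_1})(\store),\varnothing)$. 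It then remains to observe that under default preprocessing the initial memories in the definition of $\runs(\prog)$ range exactly over $\mems(S \cup R)$: since $\secrets(\prog) = S$ and $\flips(\prog) = R$ are disjoint, the decomposition $\store_1 \uplus \store_2$ with $\store_1 \in \mems(R)$ and $\store_2 \in \mems(S)$ bijectively enumerates $\mems(S \cup R)$. Hence $\runs(\prog) = \stores_n$.

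I expect the main obstacle to be bookkeeping rather than deep mathematics: in particular, maintaining the invariant that every memory in $\stores_i$ shares the common domain $(S\cup R)$ extended by the variables assigned in $\instr_1,\ldots,\instr_i$, and that each freshly assigned $x$ lies outside that domain, so that every use of $\extend{\store}{x}{\cdot}$ is well-defined. This relies on the standing constraint that protocols never overwrite views; I would state it as a side invariant carried through the induction and check that it is preserved by each $tt$ step (the domain grows by exactly the fresh variable $x$) in lockstep with the reduction sequence.
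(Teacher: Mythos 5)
Your proof is correct and takes essentially the approach the paper intends: the paper presents Lemma \ref{lemma-cruns} as an immediate consequence of Lemma \ref{lemma-solves}, obtained by left-folding the per-command truth-table step across $\prog$, which is precisely your argument. Your single-step correspondence $tt\ \stores\ \instr = \{\extend{\store}{x}{\lcod{\store,\be}{\cid}} \mid \store \in \stores\}$ (using the $\mathbb{F}_2$ dichotomy), the induction over the fold against the deterministic reduction sequence, and the observation that default preprocessing makes the initial memories range exactly over $\mems(S \cup R)$ simply make explicit the details the paper leaves implicit, including the no-overwrite invariant needed for $\extend{\store}{x}{\cdot}$ to be well-defined.
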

However, this method does not take advantage of parallelization,
in that elements of $\runs(\prog)$ can be calculated independently.

\subsection{Conversion to Stratified Datalog}

We define the syntax of Datalog as follows. As per standard
nomenclature, \emph{atoms} are $\minifed$ variables $x$,
\emph{literals} are atoms or negated atoms, and clause bodies are
conjunctions of literals.  A \emph{fact} is a clause with no body. A
\emph{Datalog program} is a list of clauses.
$$
\begin{array}{rclr}
  \mathit{body} &::=&  x \mid \neg x \mid x \wedge \mathit{body} \mid \neg x \wedge \mathit{body} \\
  \mathit{clause} &::=& x \gets \mathit{body} \mid x \gets
\end{array}
$$
When translating protocols, we need to extract the variables
that occur in expressions $\be$ computed by a client $\cid$,
written $\vars\ \be\ \cid$ where:
\begin{mathpar}
  \vars\ \secret{w} \cid \defeq \{ \elab{\secret{w}}{\cid} \}
  
  \vars\ \mesg{w} \cid \defeq \{ \elab{\mesg{w}}{\cid} \}

  \vars\ \rvl{w} \cid \defeq \{ \elab{\rvl{w}}{\cid} \}

  \vars\ \flip{w} \cid \defeq \{ \elab{\flip{w}}{\cid} \}

  \vars\ (\be_1 \exor \be_2)\ \cid \defeq (\vars\ \be_1\ \cid) \cup (\vars\ \be_2\ \cid)
\end{mathpar}
... and so on. Then, to convert a protocol $\prog$ to a Datalog
we first define the function $\bodies$ that applies $\solvealg$ locally to each command
in $\prog$, obtaining the subset of memories that result
in a variable assignment of $1$.  
\begin{definition} Define:
$$
\bodies(\xassign{x}{\be}{\cid}) \defeq (x, (\solve{(\mems(\vars\ \be\ \cid))}{\be}{\cid}))
$$
\end{definition}
The mapping of $\bodies$ across a program
$\prog$-- i.e., $(\mathit{map}\ \bodies\ \prog)$--  essentially defines the
logic program, modulo some syntactic conversion. We can
accomplish the latter as follows, where $\datalog(\prog)$ defines the
full conversion.
\begin{definition} We define the conversion from memories to
  literals and clause bodies as follows:
\begin{mathpar}
  \logit{x \mapsto 1} \defeq x

  \logit{x \mapsto 0} \defeq \neg x

  \logit{\{ x_1 \mapsto \beta_1, \ldots, x_n \mapsto \beta_n\}} \defeq
  \logit{x_1 \mapsto \beta_1} \wedge \cdots \wedge \logit{x_n \mapsto \beta_n}
\end{mathpar}
Given pairs $(x,\stores)$ in the range of $\bodies$, we define the conversion
to clauses as  $\mathit{clauses}(x,\{ \store_1,...,\store_n \}) \defeq x \gets \logit{\store_1} \vee \cdots \vee x \gets \logit{\store_n}$.
The $\minifed$-to-Datalog conversion is then defined as:
$$
\datalog(\prog) \defeq  \mathit{map}\ \mathit{clauses}\ (\mathit{map}\ \bodies\ \prog)
$$
\end{definition}

In addition to converting view definitions to logic clauses, we also need to convert
secrets and random tapes. Since we assume given values for these in an arbitrary run of
the program, we can capture these as a particular fact base.
\begin{definition}
  Given $\store$, let $\{x_1,\ldots,x_n \} =
  \{ x \in \dom(\store) \mid \store(x) = 1 \}$.
  Then define $\mathit{facts}(\store) \defeq x_1 \gets, \ldots, x_n \gets$.
\end{definition}
The following result ties these pieces together and establishes
correctness of this approach.
\begin{lemma}
  For all $\prog$, 
  $\datalog(\prog)$ is a \emph{normal}, \emph{stratified}
  program \cite{aspis2018linear}, and $\store$ is the unique Least Herbrand
  Model of: $$(\mathit{facts}(\store_{\secrets(\prog) \cup \flips(\prog)}),\datalog(\prog))$$
  iff $\store \in \runs(\prog)$.
\end{lemma}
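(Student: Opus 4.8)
The plan is to prove the two conjuncts separately: first the purely syntactic claim that $\datalog(\prog)$ is a normal, stratified program, and then the semantic correspondence between the Least Herbrand Model and elements of $\runs(\prog)$. Throughout write $S = \secrets(\prog)$ and $R = \flips(\prog)$, and identify each memory $\store$ over $\iov(\prog) \cup R$ with the Herbrand interpretation $\{\, x \mid \store(x) = 1 \,\}$; this identification is sound because variables range over $\mathbb{F}_2$, so a memory is literally a truth assignment and a set of true atoms is literally a memory.

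For the syntactic claim, normality is immediate: by the definition of $\logit{\cdot}$ a clause body $\logit{\store'}$ contains a negative literal $\neg x$ whenever $\store'(x) = 0$, so $\datalog(\prog)$ is by definition a normal (rather than definite) program. For stratification I would exploit the single-assignment discipline of $\minifed$: each view or output variable $x$ is the head of clauses arising from exactly one command $\xassign{x}{\be}{\cid}$ (recall programs do not overwrite views), and every body variable in $\vars\ \be\ \cid$ is a secret, a random sample, or a variable defined by a strictly earlier command. Hence the predicate-dependency graph is acyclic, and assigning each variable a stratum by its position in program order—with the fact base over $S \cup R$ at the bottom—yields a valid stratification, since every literal in a clause for $x$, positive or negative, mentions only strictly-lower-stratum predicates, so no predicate depends negatively on itself or on any higher stratum.

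For the semantic claim I would first invoke the standard fact that a stratified normal program has a unique perfect (standard) model, computed as the iterated least fixpoint over the strata \cite{aspis2018linear}; this is what is meant by ``the unique Least Herbrand Model'' here, so uniqueness comes for free. I would then prove by induction on program length—equivalently, stratum by stratum—that the perfect model of $(\mathit{facts}(\store_{S \cup R}), \datalog(\prog))$ coincides with the memory obtained by executing $\prog$ from the initial memory that $\mathit{facts}(\store_{S \cup R})$ encodes. The base case records that $\mathit{facts}(\store_{S \cup R})$ fixes exactly the true input atoms. For the inductive step on $\xassign{x}{\be}{\cid}$, the clauses $\mathit{clauses}(\bodies(\xassign{x}{\be}{\cid}))$ make $x$ true iff the already-determined values of $\vars\ \be\ \cid$ agree with some $\store' \in \solve{\mems(\vars\ \be\ \cid)}{\be}{\cid}$, which by Lemma \ref{lemma-solves} holds iff $\lcod{\store,\be}{\cid} = 1$, i.e. exactly the operational update $\extend{\store}{x}{\lcod{\store,\be}{\cid}}$. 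Composing the steps reproduces the left-fold of Lemma \ref{lemma-cruns}, so the model for input restriction $\store_{S \cup R}$ is precisely the unique run extending it. The ``iff'' then follows: $\store$ is the Least Herbrand Model of its own induced fact base iff $\store$ equals that unique run, i.e. iff $\store \in \runs(\prog)$.

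I expect the main obstacle to be the inductive step of the semantic claim, specifically arguing that negation-as-failure in the stratified fixpoint assigns $x$ its intended truth value. One must use stratification to conclude that the variables of $\vars\ \be\ \cid$ are fully determined before $x$ is evaluated, so that among the disjunctive clauses for $x$ exactly one body $\logit{\store'}$ can be satisfied—and, crucially, that $x$ is \emph{absent} from the model (so $\store(x) = 0$) precisely when no disjunct body is satisfied, rather than being spuriously derivable. Getting this interplay between the disjunctive clause encoding, stratification, and negation-as-failure correct, and aligning the per-stratum fixpoint order with the command order, is the delicate part; the remaining bookkeeping reduces to Lemmas \ref{lemma-solves} and \ref{lemma-cruns}.
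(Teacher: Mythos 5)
Your proposal is correct, but there is nothing in the paper to compare it against: the paper states this lemma bare, with no proof at all (the text moves directly from the lemma to the remark about computing Least Herbrand Models in parallel). Your argument supplies the missing proof, and it does so using exactly the ingredients the paper has already set up: the single-assignment discipline (``we constrain programs to not overwrite views,'' and messages/reveals may be referenced only once defined) gives acyclicity of the dependency graph and hence stratification by program order; Lemma \ref{lemma-solves} gives the per-command equivalence between a satisfied clause body $\logit{\store'}$ and $\lcod{\store,\be}{\cid} = 1$; and the stratum-by-stratum induction mirrors the left-fold characterization of $\runs(\prog)$ in Lemma \ref{lemma-cruns}, so the perfect model of $(\mathit{facts}(\store_{S \cup R}),\datalog(\prog))$ is the unique deterministic run extending $\store_{S \cup R}$, which yields the stated ``iff.'' You also correctly flag and resolve the one genuinely delicate point, namely that ``Least Herbrand Model'' must be read as the perfect (iterated least fixpoint) model of a stratified normal program, and that negation-as-failure makes $x$ absent precisely when no disjunct body for $x$ is satisfied---this is sound here only because each head has all its defining clauses generated by a single command and all body atoms lie in strictly lower strata. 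The only cosmetic caveat is the implicit identification of Herbrand interpretations with memories of domain $\iov(\prog) \cup \flips(\prog)$, which you state and justify; making the domain bookkeeping explicit would complete a fully rigorous write-up.
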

Finally, to compute $\runs(\prog)$, and thus $\progtt(\prog)$, we compute
the Least Herbrand Model $\store$ of $(\mathit{facts}(\store'),\datalog(\prog))$
for all $\store' \in \mems(\secrets(\prog) \cup \flips(\prog))$, observing
that model computation for distinct fact bases can be done in parallel. 

\paragraph{Verifying Security Properties} We can
query correct representations of $\runs(\prog)$ by using
implementations of conditioning and marginalization to automatically
verify the passive model hyperproperties described in Section
\ref{section-hyper}-- in particular correctness and noninterference
modulo output, which imply passive security.  We have used this method
to verify security in $\mathbb{F}_2$ of Shamir addition as defined in Section
\ref{section-lang-example}, and single-gate $\eand$ and $\exor$ circuits with GMW
Beaver triples. We have also
verified properties compositional properties of gates themselves that establish
circuit invariants that imply passive security in larger circuits.
We discuss these examples and proof methods in Section
\ref{section-example-gmw}.

\section{The $\metaprot$ Metalanguage}
\label{section-metalang}

Practical MPC computations protocols are
typically composed of compositional units. Examples include GMW circuits
and Yao's Garbled Circuits (YGC), that are composed of so-called
garbled gates. Languages such as Fairplay \cite{269581} provide gates as
units of abstraction that are ``wired'' together by the programmer to
generate a complete circuit.

The $\fedprot$ language is low-level and does not include abstractions
for defining composable elements. So in this Section we introduce the
$\metaprot$ language that includes structured data and function
definitions for defining composable protocol elements at a higher
level of abstraction.  The $\metaprot$ language is a
\emph{metalanguage}, where $\fedprot$ protocols are the residuum of
computation. In addition to these declarative benefits of $\metaprot$,
component definitions support compositional verification of larger
protocols as we will discuss with examples in Sections
\ref{section-example-gmw} and \ref{section-example-bdoz}.

\metaprotfig

\subsection{Syntax}

The syntax of $\metaprot$ is defined in Figure
\ref{fig-metaprot}.  It includes a syntax of function
definitions and records, and values $\mv$ include client ids $\cid$, identifier
strings $w$, expressions $\be$ in field $\mathbb{F}_p$, and the unit value $()$.
Expression forms allow dynamic construction of field expressions and $\minifed$ commands.
The construction of a command has the side-effect of adding to the residual
$\minifed$ protocol. Formally, we consider a complete metaprogram to include both a
codebase and a ``main'' program that uses the codebase. 
\begin{definition}
A \emph{codebase} $\codebase$ is a list of function 
declarations. We write $ \codebase(f) = y_1,\ldots,y_n,\ e$
iff $f(y_1,\ldots,y_n) \{ e \} \in \codebase$.
A \emph{metaprogram}, aka \emph{metaprotocol}  is a pair of a 
codebase and expression $\codebase, e$. We may omit
$\codebase$ if it is clear from context.  
\end{definition}

When we consider larger examples, our
focus will be on developing a codebase that can be used to define
arbitrary circuits, i.e., complete and concrete protocols. Since
strings and identifiers can be constructed manually, and expressions
can occur inside assignments and field expression forms, function
definitions can generalize over $\fedprot$-level patterns to obtain
composable program units.
\subsection{Semantics}

We define a small-step evaluation aka reduction relation $\redx$ in
Figure \ref{fig-metaprot}.  We write $\redxs$ to denote the
reflexive, transitive closure of $\redx$. Reduction is defined on
\emph{configurations} which are pairs of the form $\config{\prog}{e}$,
where $\prog$ is the $\minifed$ program accumulated during evaluation.
In this definition we write $e[\mv/y]$ to denote the substitution of $\mv$
for free occurrences of $y$ in $e$. The rules are mostly standard,
except when a concrete $\minifed$ assignment is encountered it is added
to the end of $\prog$.

The rules rely on a definition of \emph{evaluation contexts} $E$
allowing computation within a larger program context, where $E[e]$
denotes an expression with $e$ in the hole $[]$ of $E$. The syntax
of $E$ imposes a left-to-right order of evaluation of subexpressions
for all forms.

\section{2-Party GMW and Passive Security Proof Tactics}
\label{section-metalang-gmw}
\label{section-example-gmw}

\begin{fpfig}[t]{2-party GMW circuit library.}{fig-gmw}
{\footnotesize
  \begin{verbatimtab}
    encodegmw(in, i1, i2) {
      m[in]@i2 := (s[in] xor r[in])@i2;
      m[in]@i1 := r[in]@i2;
      m[in]
    }
    
    andtablegmw(b1, b2, r) {
      let r11 = r xor (b1 xor true) and (b2 xor true) in
      let r10 = r xor (b1 xor true) and (b2 xor false) in
      let r01 = r xor (b1 xor false) and (b2 xor true) in
      let r00 = r xor (bl xor false) and (b2 xor false) in
      { row1 = r11; row2 = r10; row3 = r01; row4 = r00 }
    }
    
    andgmw(z, x, y) {
      let r = r[z] in
      let table = andtablegmw(x,y,r) in
      m[z]@2 := OT4(x,y,table,2,1);
      m[z]@1 := r@1;
      m[z]
    }
    
    xorgmw(z, x, y)
    { m[z]@1 := (x xor y)@1; m[z]@2 := (x xor y)@2; m[z] }
    
    decodegmw(z) {
      p["1"] := z@1; p["2"] := z@2;
      out@1 := (p["1"] xor p["2"])@1;
      out@2 := (p["1"] xor p["2"])@2
    }
  \end{verbatimtab}
}
\end{fpfig}

As an extended example of our language and security model, and how the
automated techniques in Section \ref{section-bruteforce} can serve
as tactics integrated with PSL/Lilac-style proofs, we consider GMW
circuits.  The GMW protocol is a garbled binary circuit protocol.  We
will assume the 2-party version, though it generalizes to $n$
parties \cite{goldreich2019play}. GMW uses a common technique in MPC, which is to
represent values $v$ as distributed shares $v_1$ and $v_2$ with $v =
v_1 \fplus v_2$. This trick maintains secrecy of $v$ from both
parties, and in GMW it is used to maintain the intermediate values of
internal gate outputs in circuits. In related literature the notation
$\macgv{x}$ is used to represent the ``true'' value of $x$ and $[x]$
is often used to represent the share of given party.

To capture this convention, which is used in many other protocols, we
introduce a new naming convention for ``global view'' elements
$\macgv{\mesg{w}}$, which denote the summed value of
$\elab{\mesg{w}}{1}$ and $\elab{\mesg{w}}{2}$ in a protocol
run. This concept integrates program distributions in the
usual manner, as the probability of the outcome of summation
of two variables in the distribution.
\begin{definition}
  For all $\mesg{w}$ define:
  $$\pmf(\macgv{\mesg{w}} = v) \defeq \sum_{\sigma \in A} \pmf(\sigma)$$
  and also for all $\store' \in \mems(X)$ define:
  $$\condd{\pmf}{X}{\macgv{\mesg{w}} = v}(\sigma) \defeq  \sum_{\sigma \in A} \condd{\pmf}{X}{\sigma}(\sigma')$$
  where $A$ is:
  $$\{ \store \in \mems(\{ \elab{\mesg{w}}{1},\elab{\mesg{w}}{2} \} ) \mid
      \fcod{\store(\elab{\mesg{w}}{1}) + \store(\elab{\mesg{w}}{2})} = v \}$$
\end{definition}

For full details of the GMW protocol the reader is referred to
\cite{evans2018pragmatic}. Our implementation library is shown in
Figure \ref{fig-gmw}, and includes encoding functions, where
input secrets are split into shares, $\eand$ and $\exor$ gate
functions, and a decoding function. Note that $\exor$ requires
no interaction between parties, while conjunction necessitates
1-of-4 oblivious transfer. The gate computation is
done entirely in secret, and the decoding function
is where the declassification occurs-- both parties reveal
their shares of the final gate output $\macgv{z}$.

For example, the following program uses our GMW library to define
a circuit with a single \eand\ gate and input secrets $\ttt{s1}$ and
$\ttt{s2}$ from client's 1 and 2 respectively:
\begin{verbatimtab}
         let s1 = encodegmw("s1",2,1) in
         let s2 = encodegmw("s2",1,2) in
         decodegmw(andgmw("z",s1,s2))
\end{verbatimtab}
By convention we will assume that all gates are assigned unique output
identifiers, and that all programs are in the form
of a sequence of let-bindings followed by a call to $\decodegmw$
wrapping a circuit.

\paragraph{Oblivious Transfer} A passive secure oblivious transfer (OT) protocol
based on previous work \cite{barthe2019probabilistic} can be defined in $\metaprot$,
however this protocol assumes some shared randomness. Alternatively,
a simple passive secure OT can be defined with the addition of
public key cryptography as a primitive. But given the diversity
of approaches to OT, we instead assume that OT is abstract with
respect to its implementation, where calls to OT in $\mathbb{F}_2$
are of the following form-- given a \emph{choice bit}
$\be_1$ provided by a receiver $\cid$, the sender
sends either $\be_2$ or $\be_3$.
$$
\OT{\elab{\be_1}{\cid}}{\be_2}{\be_3}
$$
Critically, the sender learns nothing about $\be_1$ and the
receiver learns nothing about the unselected value, so we interpret
these calls in our implementation in $\mathbb{F}_2$ as follows.
$$
\begin{array}{l}
\solve{\stores}{\OT{\elab{\be_1}{\cid_1}}{\be_2}{\be_3}}{\cid_2} = \\
\qquad ((\solve{\stores}{\be_1}{\cid_1}) \cap 
(\solve{\stores}{\be_2}{\cid_2})) \cup \\
\qquad ((\stores - (\solve{\stores}{\be_1}{\cid_1})) \cap
(\solve{\stores}{\be_3}{\cid_2})
\end{array}
$$

\subsection{Correctness Proof with Verification Tactics}

As discussed above and in related work \cite{8429300}, probabilistic
separation conditional on certain variables-- e.g., secret inputs or
public outputs-- is a key mechanism for reasoning about MPC protocol
security. Following \cite{barthe2019probabilistic}, we define a
conditional separation relation $\condsep{\pmf}{X_1}{X_2}{X_3}$ to
mean that $X_2$ and $X_3$ are independent in $\pmf$ conditionally on
any assignment of values to $X_1$-- i.e., conditionally on any $\store
\in \mems(X_1)$. Another key concept needed especially for reasoning
about circuits is conditional determinism. For example, if $\macgv{z}$
is an output of an internal gate, it will definitely be computed using
random variables, however, it \emph{should} be deterministic under any
set of input secrets $S$, since we assume that $\idealf$ is
deterministic. Conditional uniformity is also important, since the
gradual release property of many protocols means that messages appear
in a uniform distribution to the adversary.
\begin{definition}[Conditioning Properties] 
  Given $\pmf$ and $X_1,X_2,X_3 \subseteq \dom(\pmf)$, 
  we write:
  \begin{itemize}
  \item $\condsep{\pmf}{X_1}{X_2}{X_3}$ iff for all
    $\store' \in \mems(X_1)$ and $\store \in \mems(X_2 \cup X_3)$ we have
    $\pmf(\store|\store') = \pmf(\store_{X_2}|\store') *  \pmf(\store_{X_3}|\store')$.
  \item $\conddetx{\pmf}{X_1}{X_2}$ iff for all
    $\store' \in \mems(X_1)$ there exists 
    $\store \in \mems(X_2)$ such that $\pmf(\store|\store') = 1$.
  \item $\condunix{\pmf}{X_1}{X_2}$ iff for all
    $\store' \in \mems(X_1)$ and
    $\store \in \mems(X_2)$ we have
    $\pmf(\store|\store') = 1/p^{|X_2|}$.
  \end{itemize}
\end{definition}

Given these definitions, we can formulate an invariant for circuit
computation with respect to internal gates as follows. It says that
the output of any gate is deterministic given inputs $S$, and
conditionally on $S$ corrupt views are always in a uniform random
distribution (pure noise), while output $\macgv{\mesg{z}}$ remains
separable from corrupt views and both shares of
$\macgv{\mesg{z}}$. This last condition (incidentally missing from PSL
due to the reliance on more recent innovations in conditioning logic
\cite{li2023lilac}) is critical since those shares will in fact be
revealed if $\macgv{\mesg{z}}$ is decoded as the circuit output.
\begin{lemma}[GMW Invariant]
  \label{lemma-gmwinvariant}
  Given:
  $$ (\varnothing,e) \redxs (\prog,\decodegmw(E[\mesg{z}])) $$
  Then all of the following conditions hold for all $H$ and $C$ where $\iov(\prog) = S \cup M$:
  \begin{enumerate}
  \item $\conddetx{\progtt(\prog)}{S}{\{\macgv{\mesg{z}}\}}$
  \item $\condunix{\progtt(\prog)}{S}{M_C}$
  \item $\condsep{\progtt(\prog)}{S}{\{\macgv{\mesg{z}}\}}{\{ \elab{\mesg{z}}{1}, \elab{\mesg{z}}{2} \})}$
  \end{enumerate}
\end{lemma}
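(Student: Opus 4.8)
The plan is to prove a strengthened statement by induction on the structure of the circuit, i.e., on the length of the $\metaprot$ reduction that accumulates gate commands into $\prog$. Rather than arguing only about the final wire $\mesg{z}$ feeding \texttt{decodegmw}, I would show that conditions (1)--(3) hold for the output message of \emph{every} gate (and every encoding) in $\prog$; the stated lemma is then the instance at the last gate. By the circuit-construction convention (a sequence of let-bindings each introducing one \texttt{encodegmw}, \texttt{xorgmw}, or \texttt{andgmw}), the reduction processes one wire at a time, so induction on reduction length coincides with induction on circuit depth.

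For the base case I would analyze \texttt{encodegmw}: its two assignments produce shares $\elab{\mesg{w}}{i_1}$ and $\elab{\mesg{w}}{i_2}$ whose sum is exactly $\secret{w}$, so $\macgv{\mesg{w}} = \secret{w}$ is constant given $S$ (condition 1), while the share received by a corrupt party is either the fresh sample $\flip{w}$ or $\secret{w} \exor \flip{w}$, both uniform conditioned on $S$ (condition 2). For the inductive step I would split on gate type. The \texttt{xorgmw} gate is free: each output share is the XOR of the corresponding input shares, so $\macgv{\mesg{z}} = \macgv{\mesg{x}} \exor \macgv{\mesg{y}}$ follows by direct computation and condition 1 is inherited from the determinism of the inputs; no new communication occurs. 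The \texttt{andgmw} gate is the substantive case: here I would invoke the single-gate tactic (Lemma \ref{lemma-gmwtactic}), whose local facts are discharged by the fully-automated $\mathbb{F}_2$ method of Section \ref{section-bruteforce} applied to the small program consisting of \texttt{andtablegmw} together with the abstract $\OT{\cdot}{\cdot}{\cdot}$ semantics. This tactic establishes that the share $\elab{\mesg{z}}{1} = \flip{z}$ is fresh and the OT output $\elab{\mesg{z}}{2}$ is masked by $\flip{z}$ (so the corrupt party's new share is uniform), that the OT reveals only the selected table entry (so nothing leaks about the unselected entries), and that $\macgv{\mesg{z}} = \macgv{\mesg{x}} \eand \macgv{\mesg{y}}$, giving condition 1 again from the inductive determinism of the inputs. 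Condition 3 I would obtain essentially for free from condition 1: once $\macgv{\mesg{z}}$ is constant conditioned on $S$, it is independent of every other family of variables given $S$, in particular of the pair $\{ \elab{\mesg{z}}{1}, \elab{\mesg{z}}{2} \}$, which is exactly $\condsep{\progtt(\prog)}{S}{\{\macgv{\mesg{z}}\}}{\{ \elab{\mesg{z}}{1}, \elab{\mesg{z}}{2} \}}$.

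The main obstacle is condition 2, the \emph{joint} conditional uniformity of the corrupt view as it grows across gates. The per-gate tactic only certifies that a single gate's freshly-randomized output share is locally uniform; to conclude $\condunix{\progtt(\prog)}{S}{M_C}$ for the whole accumulated program I would compose these local facts using Lilac-style conditional-independence (frame) reasoning \cite{li2023lilac}: the fresh sample $\flip{z}$ introduced by an \texttt{andgmw} gate does not occur anywhere earlier in $\prog$, so conditioned on $S$ it is independent of the entire prior corrupt view, and hence the new corrupt share extends a uniform joint distribution to a larger uniform one. The delicate point is reconciling this with the \texttt{xorgmw} shares, which are \emph{deterministic} functions of existing shares and thus contribute no new entropy; I would therefore distinguish the genuinely-communicated messages from locally-computed (send-to-self) shares and track a generating set of independently-sampled corrupt messages, arguing uniformity on that set and exhibiting the derived XOR shares as determined by it, rather than treating all coordinates of $M_C$ as independent. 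Getting this framing exactly right --- so that the inductive hypothesis supplies precisely the conditional independence the next gate's tactic consumes, across the abstract OT boundary --- is where the bulk of the care lies; the determinism (condition 1) and separation (condition 3) then propagate comparatively mechanically once the circuit-correctness identity for $\macgv{\mesg{z}}$ is in place.
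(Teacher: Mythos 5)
Your proposal is correct and follows the same skeleton as the paper's own proof: induction on the length of the $\metaprot$ reduction, encoding as the basis, $\ttt{xorgmw}$ as a trivial non-interactive case, and $\andgmw$ discharged by the automated gate tactic (Lemma \ref{lemma-gmwtactic}) verified via Section \ref{section-bruteforce}. Where you genuinely diverge is in how the local tactic facts are lifted to the global invariant, and the comparison is instructive. First, you obtain condition (3) ``for free'' from condition (1), observing that a variable that is almost surely constant conditioned on each realization of $S$ is conditionally independent of everything---in particular of its own shares; this is sound, and it bypasses the paper's route through the tactic's separation fact (z.3) and Lemma \ref{lemma-conditioning}(3). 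Second, for condition (2) the paper re-conditions the gate-local uniformity from $\{\macgv{\mesg{x}},\macgv{\mesg{y}}\}$ to $S$ via Lemma \ref{lemma-conditioning}(2) and then asserts joint uniformity of the enlarged corrupt view from ``uniqueness of gate identifiers,'' whereas you argue explicitly that the fresh sample $\flip{z}$ is conditionally independent of the entire prior corrupt view and therefore extends a uniform joint distribution to a larger one. Your treatment is more careful on exactly the point the paper glosses over: as you note, $\ttt{xorgmw}$ output shares are deterministic functions of their input shares, so joint uniformity of all of $M_C$ cannot hold literally once an xor gate and its inputs both sit in the corrupt view; restricting attention to a generating set of genuinely fresh messages (or excluding locally-derived send-to-self shares from $M$) is needed to make condition (2) true as stated, a repair the paper never articulates since its proof sketch treats only the $\andgmw$ case and is silent on xor. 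What the paper's route buys is uniformity of mechanism---every lifting step is an instance of the single composition lemma, which is what makes the tactic integration look clean and reusable; what your route buys is robustness on the xor wrinkle and one fewer appeal to Lemma \ref{lemma-conditioning}, whose part (2) as stated (proof omitted in the paper) implicitly requires precisely the freshness side condition you make explicit.
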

To prove this, we can formulate and automatically prove local,
gate-level versions of the invariant. This serves as a proof tactic
that simplifies the proof of the GMW invariant. 
\begin{lemma}[And Gate Tactic]
  \label{lemma-gmwtactic}
  Given:
  $$
  \begin{array}{c}
  (\varnothing,\andgmw(z,\mesg{x},\mesg{y}) \redxs 
  (\prog,\mesg{z})
  \end{array}
  $$
  Then all of the following conditions hold for both $\cid \in \{ 1,2 \}$ where $\iov(\prog) = M$:
  \begin{enumerate}
  \item
    $\conddetx{\progtt(\prog)}{\{ \macgv{\mesg{x}},\macgv{\mesg{y}} \}}{\{ \macgv{\mesg{z}} \}}$
  \item $\condunix{\progtt(\prog)}{\{ \macgv{\mesg{x}},\macgv{\mesg{y}} \}}{\{ \elab{\mesg{z}}{\cid} \}}$
  \item $\condsep
    {\progtt(\prog)}
    {\{ \macgv{\mesg{x}},\macgv{\mesg{y}} \}}
    {\{ \macgv{\mesg{z}} \}}
    {\{ \elab{\mesg{z}}{1},\elab{\mesg{z}}{2} \}}$
  \end{enumerate}
\end{lemma}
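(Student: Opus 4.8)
The plan is to exploit the fact that the residual protocol $\prog$ produced by $\andgmw$ lives entirely in $\mathbb{F}_2$ and ranges over only a handful of variables, so that $\progtt(\prog)$ can be pinned down exactly and conditions (1)--(3) checked by a finite computation --- indeed by the very automated tactic of Section \ref{section-bruteforce}. First I would run the $\metaprot$ reduction $\config{\varnothing}{\andgmw(z,\mesg{x},\mesg{y})} \redxs \config{\prog}{\mesg{z}}$ to extract the concrete $\prog$: unfolding $\elet{r}{\flip{z}}{\cdots}$, evaluating the call to $\ttt{andtablegmw}$ into a four-row record, and emitting the two assignment commands yields a protocol whose only commands are $\elab{\mesg{z}}{2} := (\OT{\cdot}{\cdot}{\cdot})$ (the $1$-of-$4$ selection $\ttt{OT4}$) and $\eassign{\mesg{z}}{1}{\flip{z}}{1}$, over input shares $\elab{\mesg{x}}{1},\elab{\mesg{x}}{2},\elab{\mesg{y}}{1},\elab{\mesg{y}}{2}$ and the single random bit $\elab{\flip{z}}{1}$.

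Next I would establish the two algebraic identities that drive everything. Using the $\solvealg$-based interpretation of the oblivious transfer together with the table built by $\ttt{andtablegmw}$, the selected entry is exactly $\flip{z} \fplus (\macgv{\mesg{x}} \ftimes \macgv{\mesg{y}})$, so that $\elab{\mesg{z}}{2}$ records this value while $\elab{\mesg{z}}{1} = \flip{z}$; summing the two shares gives $\macgv{\mesg{z}} = \elab{\mesg{z}}{1} \fplus \elab{\mesg{z}}{2} = \macgv{\mesg{x}} \ftimes \macgv{\mesg{y}}$. With these in hand the three conditions follow directly. Condition (1), $\conddetx{\progtt(\prog)}{\{\macgv{\mesg{x}},\macgv{\mesg{y}}\}}{\{\macgv{\mesg{z}}\}}$, is just this correctness identity: fixing the input wire values fixes $\macgv{\mesg{z}}$. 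Condition (2) holds because $\flip{z}$ is a random-tape variable, uniform and independent of the input shares in the basic distribution; conditioning on $\macgv{\mesg{x}},\macgv{\mesg{y}}$ constrains the input shares but leaves $\flip{z}$ untouched, so $\elab{\mesg{z}}{1}=\flip{z}$ and $\elab{\mesg{z}}{2}=\flip{z}\fplus(\macgv{\mesg{x}}\ftimes\macgv{\mesg{y}})$ are each uniform. Condition (3), $\condsep{\progtt(\prog)}{\{\macgv{\mesg{x}},\macgv{\mesg{y}}\}}{\{\macgv{\mesg{z}}\}}{\{\elab{\mesg{z}}{1},\elab{\mesg{z}}{2}\}}$, is then essentially free: by (1) the quantity $\macgv{\mesg{z}}$ is constant once $\macgv{\mesg{x}},\macgv{\mesg{y}}$ are fixed, and a conditionally constant quantity is independent of anything, so the required factorization holds trivially.

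Rather than verifying these by hand, I would discharge them with the automated method itself: by Lemma \ref{lemma-cruns} a left fold of $\solvealg$ across $\prog$ computes $\runs(\prog)$, and hence $\progtt(\prog)$ exactly, over this small variable set, after which the marginal and conditional queries encoding (1)--(3) are decidable. This is precisely the sense in which the result is meant to serve as a reusable proof tactic.

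I expect the main obstacle to be pinning down $\prog$ and, crucially, the input distribution under which $\progtt(\prog)$ is taken. Since $\iov(\prog)=M$ contains no secrets, the input shares $\elab{\mesg{x}}{i},\elab{\mesg{y}}{i}$ are not produced by $\prog$ itself; they must be supplied as uniform, mutually independent preprocessing --- the same convention that lets this gate lemma slot into the circuit-level invariant of Lemma \ref{lemma-gmwinvariant} --- and one must confirm that $\flip{z}$ is independent of them. A secondary subtlety is that the paper fixes the $\solvealg$ interpretation only for $1$-of-$2$ transfer, so the $\ttt{OT4}$ used here must be given the analogous $1$-of-$4$ reading, and the global-view conditioning in (3) must be read through the summation semantics of $\macgv{\mesg{w}}$ so that the consistency between $\macgv{\mesg{z}}$ and its two shares is respected. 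Once the distribution is fixed, the verification is routine.
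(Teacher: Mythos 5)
Your proposal is correct and takes essentially the same approach as the paper: the paper's entire proof of Lemma \ref{lemma-gmwtactic} is the single line ``Verified automatically using techniques described in Section \ref{section-bruteforce}'', which is precisely your concluding step of computing $\runs(\prog)$ (hence $\progtt(\prog)$) by folding $\solvealg$ across the residual gate protocol and discharging conditions (1)--(3) as finite marginal/conditional queries. Your hand derivation of the masking identities, the uniform-preprocessing caveat for the input shares, and the 1-of-4 reading of $\ttt{OT4}$ supply detail the paper leaves implicit, but they do not change the method.
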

\begin{proof}
Verified automatically using techniques described in Section \ref{section-bruteforce}.  
\end{proof}

To properly integrate the local reasoning of Lemma \ref{lemma-gmwtactic} with
the global reasoning of Lemma \ref{lemma-gmwinvariant}, we can demonstrate
the following properties of conditioning-based reasoning. While a frame rule
was developed in \cite{li2023lilac}, these properties are distinct and
useful for reasoning about MPC. We omit proofs for brevity. 
\begin{lemma}
  \label{lemma-conditioning}
  Given $\pmf$, each of the following hold for $S,V_1,V_2,V_3 \in \dom(\pmf)$:
  \begin{enumerate}
    \item Given $\condp{\pmf}{S}{\detx{V_1}}$ and
      $\condp{\pmf}{V_1}{\detx{V_2}}$, then $\condp{\pmf}{S}{\detx{V_2}}$.
    \item Given $\condp{\pmf}{S}{\detx{V_1}}$ and
      $\condp{\pmf}{V_1}{\unix{V_2}}$, then $\condp{\pmf}{S}{\unix{V_2}}$.
    \item Given $\condp{\pmf}{S}{\detx{V_1}}$ and
      $\condsep{\pmf}{V_1}{V_2}{V_3}$, then $\condsep{\pmf}{S}{V_2}{V_3}$.
  \end{enumerate}
\end{lemma}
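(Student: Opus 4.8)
The plan is to reduce all three parts to one device. Throughout, take $S,V_1,V_2,V_3$ pairwise disjoint (as they are in the intended use, where they are secrets, gate inputs, and output shares), so that every conditional below is well-defined. Fix a conditioning memory $\store' \in \mems(S)$ with $\margd{\pmf}{S}(\store') > 0$; memories of zero marginal make every conditional vanish by definition, so the claims are vacuous there. For any $\store \in \mems(V_2)$ (and likewise over $\mems(V_2\cup V_3)$ in part (3)), standard chain-rule manipulation of the pmf gives $\pmf(\store \mid \store') = \sum_{\tau \in \mems(V_1)} \pmf(\store \mid \tau \uplus \store')\,\pmf(\tau \mid \store')$. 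The hypothesis $\conddetx{\pmf}{S}{V_1}$ supplies a unique $\store_{V_1} \in \mems(V_1)$ with $\pmf(\store_{V_1}\mid\store') = 1$, so the sum collapses to the single term $\pmf(\store \mid \store_{V_1} \uplus \store')$. One checks that $\store_{V_1}$ has positive marginal, since $\margd{\pmf}{V_1 \cup S}(\store_{V_1}\uplus\store') = \margd{\pmf}{S}(\store') > 0$, which keeps all subsequent conditionals well-defined.

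For part (1), after the collapse I must show $\pmf(\store_{V_2}\mid\store_{V_1}\uplus\store') = 1$, where $\store_{V_2}$ is the value of $V_2$ forced by $\store_{V_1}$ under $\conddetx{\pmf}{V_1}{V_2}$. This is the clean case, because a probability-one event stays probability-one under any further positive-probability conditioning: from $\pmf(\store_{V_2}\mid\store_{V_1}) = 1$ it follows that $\margd{\pmf}{V_2\cup V_1}$ assigns zero mass to every $\store''_{V_2}\neq\store_{V_2}$ together with $\store_{V_1}$, hence also zero mass after intersecting with $\{S = \store'\}$; dividing by the positive quantity $\margd{\pmf}{V_1 \cup S}(\store_{V_1}\uplus\store')$ yields $\pmf(\store_{V_2}\mid\store_{V_1}\uplus\store') = 1$. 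Combined with the collapse this gives $\pmf(\store_{V_2}\mid\store') = 1$, exhibiting $\store_{V_2}$ as the witness for $\conddetx{\pmf}{S}{V_2}$. No assumption beyond the two hypotheses is used, since determinism is robust to refining the conditioning.

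Parts (2) and (3) use the same collapse but now need to move from $\pmf(\cdot\mid\store_{V_1}\uplus\store')$ back to $\pmf(\cdot\mid\store_{V_1})$ in order to invoke $\condunix{\pmf}{V_1}{V_2}$, respectively $\condsep{\pmf}{V_1}{V_2}{V_3}$. Thus the whole argument comes down to the conditional-independence equality $\pmf(\store\mid\store_{V_1}\uplus\store') = \pmf(\store\mid\store_{V_1})$ --- that $V_1$ \emph{screens} $V_2$ (and $V_3$) from the residual information in $S$. I expect this to be the main obstacle, because it is not implied by determinism alone: when $V_1$ is a strictly coarser deterministic function of $S$, the extra secret information in $\store'$ can collapse a $V_1$-uniform variable to a point mass (breaking (2)) or induce a collider-style dependence between $V_2$ and $V_3$ (breaking (3)). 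So the plan is to supply this screening as the missing ingredient; in the intended use it is immediate, since the $V_2$ (resp. $V_3$) of interest are gate-output shares built from fresh random-tape samples $\flip{w}$ that, conditioned on the gate inputs $V_1$, are independent of $S$. Granting the screening equality, the remaining steps are routine: in (2) each $\pmf(\store\mid\store')$ equals $\pmf(\store\mid\store_{V_1}) = 1/p^{|V_2|}$, giving $\condunix{\pmf}{S}{V_2}$; in (3) the factorization $\pmf(\store\mid\store_{V_1}) = \pmf(\store_{V_2}\mid\store_{V_1})\,\pmf(\store_{V_3}\mid\store_{V_1})$ transports to $\store'$ factor-by-factor (using screening for $V_2$, for $V_3$, and jointly), giving $\condsep{\pmf}{S}{V_2}{V_3}$.
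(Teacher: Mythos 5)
Your proposal cannot be compared against the paper's own argument, because the paper gives none: Lemma \ref{lemma-conditioning} is introduced with ``We omit proofs for brevity.'' Judged on its own terms, your treatment of part (1) is correct and complete: the total-probability collapse $\pmf(\store\mid\store') = \pmf(\store\mid\store_{V_1}\uplus\store')$, plus the observation that a probability-one event survives any refinement of the conditioning event with positive mass, yields $\condp{\pmf}{S}{\detx{V_2}}$ with no further assumptions. Your positivity bookkeeping is also right: the hypothesis $\condp{\pmf}{S}{\detx{V_1}}$ already forces every $\store'\in\mems(S)$ to have positive marginal, and $\margd{\pmf}{V_1\cup S}(\store_{V_1}\uplus\store') = \margd{\pmf}{S}(\store')$ keeps the refined conditionals well-defined.

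Your central claim about parts (2) and (3) --- that the screening equality $\pmf(\cdot\mid\store_{V_1}\uplus\store') = \pmf(\cdot\mid\store_{V_1})$ is genuinely missing and not derivable from the stated hypotheses --- is also correct, and it is worth making your counterexample sketches concrete, since they show the lemma as literally stated is false. For (2): let $s_1,s_2$ be independent uniform bits, $S=\{s_1,s_2\}$, $V_1=\{v_1\}$ with $v_1=s_1$, and $V_2=\{v_2\}$ with $v_2=s_2$. Then $\condp{\pmf}{S}{\detx{V_1}}$ and $\condp{\pmf}{V_1}{\unix{V_2}}$ both hold, yet conditioned on a full assignment of $S$ the variable $v_2$ is a point mass, so $\condp{\pmf}{S}{\unix{V_2}}$ fails. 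For (3): let $v_2,v_3,s_0$ be independent uniform bits, $s_1 = v_2\oplus v_3$, $S=\{s_0,s_1\}$, $V_1=\{v_1\}$ with $v_1=s_0$. Then $\condp{\pmf}{S}{\detx{V_1}}$ and $\condsep{\pmf}{V_1}{V_2}{V_3}$ hold, yet given $S$ the variables $v_2,v_3$ are perfectly correlated, so $\condsep{\pmf}{S}{V_2}{V_3}$ fails. Your proposed repair --- requiring $V_2$ (resp.\ $V_2\cup V_3$) to be independent of $S$ conditionally on $V_1$ --- is exactly the right patch, and it is what the paper's application silently supplies: in the proof of Lemma \ref{lemma-gmwinvariant}, the gate-output shares are built from per-gate fresh samples $\flip{w}$, which screen them from $S$ given the gate-input values $V_1$. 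So your argument is as complete as the statement permits; the gap you flag lies in the lemma's hypotheses, not in your reasoning, and a corrected statement should carry the screening condition explicitly.
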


Then we can put the pieces together to prove the invariant, using automated tactics
for gate-level reasoning. We sketch some elements of this proof to focus on
novel aspects of our technique. 
\begin{proof}[Proof of Lemma \ref{lemma-gmwinvariant}]
  By induction on the length of $(\varnothing,e) \redxs (\prog,\decodegmw(E[\mesg{z}]))$.
  Encoding establishes the invariant in the basis. The most interesting inductive
  case is the $\andgmw$ gate. 
  \paragraph{Case $\andgmw$.} In this case we have:
  $$
  \begin{array}{c}
  (\varnothing,e) \redxs (\prog',\decodegmw(E[\andgmw(z,\mesg{x},\mesg{y})])) \redxs \\
    (\prog,\decodegmw(E[\mesg{z}]))
  \end{array}
  $$
  Letting $\iov(\prog) = S \cup M$, by definition and the induction hypothesis there
  exist $S^1,S^2 \subseteq S$ and $M^1,M^2 \subseteq M$
  such that:
  \begin{enumerate}[\hspace{5mm}(x.1)]
  \item $\conddetx{\progtt(\prog)}{S^1}{\{\macgv{\mesg{x}}\}}$
  \item $\condunix{\progtt(\prog)}{S^1}{M^1_C}$
  \end{enumerate}
  and also:
  \begin{enumerate}[\hspace{5mm}(y.1)]
  \item $\conddetx{\progtt(\prog)}{S^2}{\{\macgv{\mesg{y}}\}}$
  \item $\condunix{\progtt(\prog)}{S^2}{M^2_C}$
  \end{enumerate}
  and by Lemma \ref{lemma-gmwtactic} we have:
  \begin{enumerate}[\hspace{5mm}(z.1)]
  \item
    $\conddetx{\progtt(\prog)}{\{ \macgv{\mesg{x}},\macgv{\mesg{y}} \}}{\{ \macgv{\mesg{z}} \}}$
  \item $\condunix{\progtt(\prog)}{\{ \macgv{\mesg{x}},\macgv{\mesg{y}} \}}{\{ \elab{\mesg{z}}{\cid} \}}$
  \item $\condsep
    {\progtt(\prog)}
    {\{ \macgv{\mesg{x}},\macgv{\mesg{y}} \}}
    {\{ \macgv{\mesg{z}} \}}
    {\{ \elab{\mesg{z}}{1},\elab{\mesg{z}}{2} \}}$
  \end{enumerate}
  Given $x.1$ and $y.1$ and assumed uniqueness of random variables
  used in each gate we have $\conddetx{\progtt(\prog)}{S}{\{
    \macgv{\mesg{x}},\macgv{\mesg{y}}\}}$, so it follows from $z.1$
  and Lemma \ref{lemma-conditioning} (1) that
  $\conddetx{\progtt(\prog)}{S}{\{\macgv{\mesg{z}}\}}$, and also by
  $z.3$ and Lemma \ref{lemma-conditioning} (3):
  $$\condsep
  {\progtt(\prog)}
  {S}
  {\{ \macgv{\mesg{z}} \}}
  {\{ \elab{\mesg{z}}{1},\elab{\mesg{z}}{2} \}}$$
  Additionally Lemma \ref{lemma-conditioning} (2) gives $\condunix{\progtt(\prog)}{S}{\{ \elab{\mesg{z}}{\cid} \}}$,
  and $x.2$ and $y.2$ give $\condunix{\progtt(\prog)}{S}{(M^1 \cup M^2)_C}$, $z.3$ together with
  uniqueness of gate identifiers implies:
  $$
  \condunix{\progtt(\prog)}{S}{(M^1 \cup M^2)_C \cup \{  \elab{\mesg{z}}{\cid} \}}
  $$
  This implies the result.
\end{proof}
The preceding Lemma, together with some additional observations about decoding, establish correctness
of arbitrary circuits. 
\begin{theorem}
  \label{theorem-gmw}
  If $e$ is a GMW circuit protocol in $\metaprot$ with $(\varnothing,e) \redxs (\prog,())$
  then $\prog$ satisfies noninterference modulo output. 
\end{theorem}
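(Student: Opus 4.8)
The plan is to reduce the theorem to Lemma~\ref{lemma-gmwinvariant} together with a direct analysis of the $\decodegmw$ step, and then to verify Definition~\ref{definition-NIMO} for the resulting protocol. By the assumed circuit convention, the reduction $(\varnothing,e) \redxs (\prog,())$ factors through an intermediate configuration $(\prog_0,\decodegmw(E[\mesg{z}]))$, where $\prog_0$ is the protocol accumulated for the circuit proper (so $\iov(\prog_0) = S \cup M$) and the remaining steps append the four decoding commands. First I would apply Lemma~\ref{lemma-gmwinvariant} to $\prog_0$ to obtain, for every $H$ and $C$, the three invariant conditions: (1) $\conddetx{\progtt(\prog_0)}{S}{\{\macgv{\mesg{z}}\}}$; (2) $\condunix{\progtt(\prog_0)}{S}{M_C}$; and (3) $\condsep{\progtt(\prog_0)}{S}{\{\macgv{\mesg{z}}\}}{\{\elab{\mesg{z}}{1},\elab{\mesg{z}}{2}\}}$. (The partitions with $H=\varnothing$ or $C=\varnothing$ make Definition~\ref{definition-NIMO} immediate, so I focus on the genuinely mixed case.)

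Next I would characterize the effect of decoding on the distribution. The commands of $\decodegmw$ are deterministic given $\prog_0$'s final memory: they set $\rvl{1} = \elab{\mesg{z}}{1}$, $\rvl{2} = \elab{\mesg{z}}{2}$, and $\out{1} = \out{2} = \macgv{\mesg{z}}$. Hence $\progtt(\prog)$ is determined from $\progtt(\prog_0)$ by these deterministic assignments, with $\iov(\prog) = S \cup M \cup \{\rvl{1},\rvl{2}\} \cup \{\out{1},\out{2}\}$. Two ``additional observations about decoding'' follow at once. Correctness holds because each output equals $\macgv{\mesg{z}}$, which by~(1) is a deterministic function of $S$, matching a deterministic $\idealf$. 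And the corrupt views $\houtputs$ of $\prog$ consist exactly of the honest-to-corrupt internal messages, a subset of $M_C$, together with whichever of the two broadcast shares is revealed by an honest client.

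The core of the proof is then to establish the equality required by Definition~\ref{definition-NIMO}: that for all $\store_1 \in \mems(S_C \cup O)$ and $\store_2 \in \mems(\houtputs)$,
$$
\condd{\progtt(\prog)}{S_H}{\store_1} = \condd{\progtt(\prog)}{S_H}{\store_1 \uplus \store_2},
$$
i.e.\ that the corrupt views carry no information about $S_H$ once $S_C$ and the outputs $O$ are fixed. I would split $\store_2$ into its internal-message part and its revealed-share part. For the internal messages, condition~(2) says they are uniform conditionally on all of $S$, hence independent of the secrets; I would lift this to independence from $S_H$ conditionally on $S_C \cup O$. For the revealed honest share, the corrupt party already holds its own share in $M_C$, and since $\out{1} = \macgv{\mesg{z}}$ is fixed by $\store_1$, the broadcast honest share equals $\macgv{\mesg{z}} \fplus \elab{\mesg{z}}{\cid}$ (with $\cid \in C$) and is therefore a deterministic function of quantities already being conditioned on; conditioning on it is redundant. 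Condition~(3) is what guarantees this redundancy is clean: it ensures the shares and the reconstructed value are separable given $S$, so exposing the shares during decode leaks nothing about $S_H$ beyond the value $\macgv{\mesg{z}}$ that the output already reveals.

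I expect the main obstacle to be precisely this bridging step: Lemma~\ref{lemma-gmwinvariant} conditions on the \emph{entire} secret set $S$, whereas Definition~\ref{definition-NIMO} conditions only on $S_C \cup O$ and asks for conditional independence of $S_H$ from the corrupt views. To cross this gap I would fix an output memory, restrict attention to the set of honest secrets mapping to it under $\idealf$ (a kernel $\kernel{\idealf}{\cdot}$), and use conditional uniformity~(2) and separability~(3) to argue that the joint distribution of the corrupt views is identical for every honest secret in that kernel. The delicate point is verifying that the gate randomness masks the revealed honest share \emph{uniformly across the whole kernel}, rather than merely for each fixed $S$; condition~(3) is exactly the hook that rules out any residual, kernel-dependent correlation between the exposed shares and $S_H$, and I would discharge the remaining computation by the same summation-over-$\mems(S_H)$ manipulation used to relate marginals and conditionals in Definition~\ref{definition-NIMO}.
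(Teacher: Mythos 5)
Your proposal follows essentially the same route as the paper's proof: factor the reduction through the configuration just before $\decodegmw$, invoke Lemma~\ref{lemma-gmwinvariant} to get the three conditioning properties, observe that decoding deterministically reveals the two shares and outputs $\macgv{\mesg{z}}$, and conclude noninterference modulo output. In fact you go further than the paper, which compresses the final step into ``which implies the result'': your explicit bridging argument---lifting the invariants conditioned on all of $S$ to the $S_C \cup O$ conditioning of Definition~\ref{definition-NIMO} via the kernel of $\idealf$, and noting that the revealed honest share is a deterministic function of the output and the corrupt share---is a correct filling-in of exactly the step the paper leaves implicit.
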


\begin{proof}
  Given that $e$ is a GMW circuit protocol, then by definition we have:
  $$
  (\varnothing,e) \redxs (\prog',\decodegmw(\mesg{z})) \redxs (\prog,())
  $$
  where by Lemma \ref{lemma-gmwinvariant} and definition of $\decodegmw$,
  for any $H$ and $C$ letting $\iov(\prog) = S \cup M \cup P \cup O$ we
  have:
  \begin{mathpar}
    \conddetx{\progtt(\prog)}{S}{\{\macgv{\mesg{z}}\}}
    \condunix{\progtt(\prog)}{S}{M_C}
    \condsep{\progtt(\prog)}{S}{\{\macgv{\mesg{z}}\}}{\{ \elab{\mesg{z}}{1}, \elab{\mesg{z}}{2} \})}
  \end{mathpar}
  and by definition of $\decodegmw$ the value $\{\macgv{\mesg{z}}\}$ is output
  after the parties publicly reveal $\{ \elab{\mesg{z}}{1}, \elab{\mesg{z}}{2} \}$, so we have:
  \begin{mathpar} 
    \condunix{\progtt(\prog)}{S}{M_C}
    
    \condsep{\progtt(\prog)}{S}{O}{P}
  \end{mathpar}
  which implies the result.
\end{proof}

\subsection{2-Party BDOZ and Integrity Enforcement}
\label{section-example-bdoz}

\begin{fpfig}[t]{2-party BDOZ protocol library.}{fig-bdoz}
{\footnotesize{
\begin{verbatimtab}
auth(s,m,k,i) { assert(m == k + (m["delta"] * s))@i; }
  
sum_she(z,x,y,i) {
  m[z++"s"]@i := (m[x++"s"] + m[y++"s"])@i;
  m[z++"m"]@i := (m[x++"m"] + m[y++"m"])@i;
  m[z++"k"]@i := (m[x++"k"] + m[y++"k"])@i
}

open(x,i1,i2){
  m[x++"exts"]@i1 := m[x++"s"]@i2;
  m[x++"extm"]@i1 := m[x++"m"]@i2;
  auth(m[x++"exts"], m[x++"extm"], m[x++"k"], i1);
  m[x]@i1 := (m[x++"exts"] + m[x++"s"])@i1
}
\end{verbatimtab}
}}
\end{fpfig}

In a malicious setting, ``detecting cheating'' by adding
information-theoretic secure MAC codes to shares is a fundamental
technique pioneered in systems such as BDOZ and SPDZ
\cite{SPDZ1,SPDZ2,BDOZ,10.1007/978-3-030-68869-1_3}.  These protocols
assume a pre-processing phase that distributes shares with MAC codes
to clients.  This integrates well with pre-processed Beaver Triples to
implement malicious secure, and relatively efficient, multiplication
\cite{evans2018pragmatic}. Recall that Beaver triples are values $a,b,c$ with
$\fcod{a * b} = c$, unique per multiplication gate, that are secret
shared with clients during pre-processing. Here we consider the
2-party version.

A field value $v$ is secret shared among 2 clients in BDOZ in the same
manner as in GMW, but with the addition of a separate MAC values.  Each
client $\cid$ gets a pair of values $(v_\cid,m_\cid)$, where $v_\cid$
is a share of $v$ reconstructed by addition, i.e., $v = \fcod{v_1
  \fplus v_2}$, and $m_\cid$ is a MAC of $v_\cid$.  More precisely,
$m_\cid = k + (k_\Delta * v_\cid)$ where \emph{keys} $k$ and
$k_\Delta$ are known only to $\cid' \ne \cid$ and $k_\Delta$. The
\emph{local key} $k$ is unique per MAC, while the \emph{global key}
$k_\Delta$ is common to all MACs authenticated by $\cid'$. This is a
semi-homomorphic encryption scheme that supports addition of shares
and multiplication of shares by a constant-- for more details the
reader is referred to Orsini \cite{10.1007/978-3-030-68869-1_3}. We
note that while we restrict values $v$, $m$, and $k$ to the same field
in this presentation for simplicity, in general $m$ and $k$ can be in
extensions of $\mathbb{F}_p$.

We can capture both the preliminary distribution of Beaver triples and
BDOZ shares as a pre-processing predicate that establishes conditions
for initial memories (see Definition \ref{def-aprogtt}).  Here we
assume two input secrets $\elab{\secret{x}}{1}$ and
$\elab{\secret{y}}{2}$ and a single Beaver Triple to compute
$\elab{\secret{x}}{1} \ftimes \elab{\secret{y}}{2}$, but we can extend
this for additional gates.  As for GMW, we use $\macgv{\mesg{w}}$ to
refer to secret-shared values reconstructed with addition, but for
BDOZ by convention each share of $\macgv{\mesg{w}}$ is represented as
$\elab{\mesg{w\ttt{s}}}{\cid}$, the MAC of which is represented as a
$\elab{\mesg{w\ttt{m}}}{\cid}$ for all $\cid$, 
and each client holds a key $\elab{\mesg{w\ttt{k}}}{\cid}$ for
authentication of the other's share.
\begin{definition}[BDOZ preprocessing]
  Define:
  \begin{mathpar}
    \mathit{shares} \defeq
    \{ \elab{\mesg{w\ttt{s}}}{\cid}\ |\ \cid \in \{ 1, 2 \} \wedge w \in \{ a,b,c,x,y \}  \}

    \mathit{macs} \defeq  \{ \elab{\mesg{w\ttt{m}}}{\cid}\ |\ \cid \in \{ 1, 2 \} \wedge w \in \{ a,b,c,x,y \}  \}

    \mathit{keys} \defeq  \begin{array}{l}\{ \elab{\mesg{w\ttt{k}}}{\cid}\ |\ \cid \in \{ 1, 2 \} \wedge w \in
    \{ a,b,c,x,y \}  \} \cup \\ \{ \elab{\mesg{\ttt{delta}}}{\cid}\ |\ \cid \in \{ 1, 2 \} \} \end{array}
  \end{mathpar}
  Then a memory $\store$ satisfies BDOZ preprocessing iff:
  $$\dom(\store) = \{ \elab{\secret{x}}{1}, \elab{\secret{y}}{2} \} \cup \mathit{shares}
  \cup \mathit{macs} \cup \mathit{keys}$$
  and, writing $\store(\macgv{\mesg{w}})$ to denote
  $\fcod{\store(\elab{\mesg{w\ttt{s}}}{1}) + \store(\elab{\mesg{w\ttt{s}}}{2})}$,
  the following conditions hold:
  \begin{mathpar}
    \store(\macgv{\mesg{x}}) = \store(\elab{\secret{x}}{1})
    
    \store(\macgv{\mesg{y}}) = \store(\elab{\secret{y}}{2})
    
    \fcod{\store(\macgv{\mesg{a}}) * \store(\macgv{\mesg{b}})} = \store(\macgv{\mesg{c}})
  \end{mathpar}
  and for all $\cid,\cid' \in \{1,2\}$ with $\cid \ne \cid'$ and $w \in \{ a,b,c,x,y\}$:
  $$
  \store(\mx{w\ttt{m}}{\cid}) =
    \fcod{\store(\mx{w\ttt{k}}{\cid'}) + \store(\mx{\ttt{delta}}{\cid'}) * \store(\mx{w\ttt{s}}{\cid})}
  $$
\end{definition}

With these conventions, a BDOZ library is defined in Figure
\ref{fig-bdoz} and a multiplication protocol to compute $\sx{x}{1} *
\sx{y}{2}$ is defined in Figure \ref{fig-beaver}. The latter is
malicious secure assuming that client 1 is trusted. Initially, in
calls to $\ttt{sum\_she}$ each party computes the sum of their shares
of $\sx{x}{1} + \macgv{\mesg{a}}$ and $\sx{y}{2} + \macgv{\mesg{b}}$,
and non-interactively computes the associated MACs and keys in the
SHE authentication scheme. These values are then
securely opened on client 1, which authenticates client
2's share via a call to $\ttt{auth}$ that $\ttt{assert}$s
the BDOZ authentication property. Note that the
opening of $\sx{y}{2} + \macgv{\mesg{b}}$ reveals no
information to client 1 about $\sx{y}{2}$ since $\macgv{\mesg{b}}$ is
in a uniform random distribution by assumption. It is
also malicious secure since any cheating is detected
by the $\ttt{assert}$.

Following this, each party non-interactively calculates their share of
$\sx{x}{1} \ftimes \sx{y}{2}$, while client 2 calculates the MAC of
their share, and reveals both their share and MAC to client 1. These
are reveals since information about $\sx{y}{2}$ is exposed by
the multiplication.  Client 1 also calculates the appropriate key and
authenticates client 2's share following the BDOZ SHE MAC scheme
\cite{10.1007/978-3-030-68869-1_3} and the standard Beaver protocol
\cite{evans2018pragmatic}. Hence the protocol is correct and
malicious secure given trusted client 1. The authentication scheme
can also be run symmetrically on client 2.

\begin{fpfig}[t]{Authenticated 2-party multiplication with honest client 1.}{fig-beaver}
{\footnotesize
\begin{verbatimtab}
sum_she("d","a","x",1);
sum_she("d","a","x",2);
sum_she("e","b","y",1);
sum_she("e","b","y",2);

open("d",1,2);
open("e",1,2);

p["xys2"] := (m["bs"] * m["d"] + m["as"] * m["e"] + m["cs"])@2;
p["xym2"] := (m["bm"] * m["d"] + m["am"] * m["e"] + m["cm"])@2;

m["xys"]@1 := (m["bs"] * m["d"] + m["as"] * m["e"] + m["cs"] +
               m["d"] * m["e"])@1;
m["xyk"]@1 := (m["bk"] * m["d"] + m["ak"] * m["e"] + m["ck"])@1;

auth(p["xys2"], p["xym2"], m["xyk"], 1);

out@1 := m["xys"] + p["xys2"]@1
\end{verbatimtab}
}
\end{fpfig}

\subsection{Cheating Detection and Integrity}

We can carry out similar proofs of passive security for the protocol
in Figure \ref{fig-beaver} as for GMW, even using automated tactics
for the protocol in $\mathbb{F}_2$. But in the case of BDOZ we are
also concerned with malicious security. To demonstrate this, we can
show that the protocol satisfies integrity in the sense of Definition
\ref{def-integrity}. To do so, we observe that it satisfies a stronger
property, that we call cheating detection. Cheating detection says
that the adversary can only execute the protocol honestly, or or else
gets caught (and abort). Note in particular that this is \emph{not}
a form of endorsement (the dual of declassification), which allows
the adversary some leeway. Rather, it is a confirmation of high integrity.

Focusing in, we identify the \emph{adversarial inputs} as the messages
sent from the adversary to honest parties on which honest responses
to the adversary may depend. We want to say that these are the messages
that must be legitimate.
\begin{definition}
  Given $\prog$ with $\iov(\prog) = S \cup V \cup O$,
  let $X_H \subseteq \{ x \mid x \in (\houtputs \cup O_H) \wedge x \in \dom(\store) \}$.
  Then the \emph{adversarial inputs to $X_H$} is the least set
  $X_C \subseteq \cinputs$ such that $\progtt(\prog) \not\vdash X_C * X_H$.
\end{definition}
Now, we can characterize protocols with cheating detection as those where
the adversary can only behave honestly/passively, or else cause abort.
\begin{definition}[Cheating Detection]
  \emph{Cheating is detectable} in $\prog$ with $\iov(\prog) = S \cup V \cup O$ iff
  for all  $\store \in \aruns(\prog)$,
  letting $X_H = \{ x \mid x \in (\houtputs \cup O_H) \wedge x \in \dom(\store) \}$,
  and letting $X_C$ be the adversarial inputs to $X_H$,
  there exists $\sigma'\in \runs(\prog)$
  with $\store_{X_C} = \store'_{X_C}$.  
\end{definition}

It is straightforward to demonstrate that cheating detection has integrity,
since only the ``passive'' adversary can elicit a response from honest parties. 
\begin{lemma}
  \label{lemma-cheating}
  If cheating is detectable in $\prog$, then $\prog$ has integrity.
\end{lemma}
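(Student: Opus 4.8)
The plan is to unfold Definition~\ref{def-integrity} together with the cheating-detection hypothesis and show that, for each adversarial run (aborting or not), cheating detection supplies a passive run whose elicited honest responses are distributed identically. Fix honest and corrupt sets $H$ and $C$, an adversary $\adversary$, and a run $\store \in \aruns(\prog)$, where $\iov(\prog) = S \cup V \cup O$. Put $X \defeq (\houtputs \cup O_H) \cap \dom(\store)$; this is exactly the set $X_H$ named in the cheating-detection hypothesis, and I let $X_C \subseteq \cinputs$ be the adversarial inputs to $X$. Applying the assumption that cheating is detectable in $\prog$ yields a passive run $\store'' \in \runs(\prog)$ with $\store_{X_C} = \store''_{X_C}$. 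I will use $\store''$ to produce the secret witness $\store' \in \mems(S)$ that Definition~\ref{def-integrity} demands.

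The core of the argument rests on two observations. First, the honest clients execute identically in the two semantics: the relations $\redx$ and $\aredx$ of Figure~\ref{fig-adversary} differ only on assignments owned by $C$, so once the messages the honest clients receive are fixed, the distribution of their responses $X$ over the honest portion of the random tape is the same under $\progtt(\prog,\adversary)$ and under $\progtt(\prog)$. Second, since $X_C$ is least with $\progtt(\prog) \not\vdash X_C * X$, the set $X$ is separable from the remaining adversarial inputs, i.e. $\sep{\progtt(\prog)}{(\cinputs - X_C)}{X}$ holds; hence conditioning $X$ on the full adversarial input $\store_{\cinputs}$ is the same as conditioning on $\store_{X_C}$ alone. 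I then take $\store'$ to agree with the actual run on honest secrets, $\store'_{S_H} \defeq \store_{S_H}$, and to inherit from $\store''$ corrupt secrets realizing $\store_{X_C}$; cheating detection guarantees $\store_{X_C}$ is produced with positive probability by such a passive run, so $\condd{\progtt(\prog)}{X}{\store'}$ is well defined. Combining the two observations gives the required equality
\[
\condd{\progtt(\prog,\adversary)}{X}{\store_{S_H \cup \cinputs}} = \condd{\progtt(\prog)}{X}{\store'}.
\]

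I expect the main obstacle to be reconciling the two conditioning sets and pinning down the witness's honest secrets: the adversarial side conditions on honest secrets together with all of $\cinputs$, whereas the integrity conclusion conditions the passive distribution on a complete secret assignment $\store'$, in which the corrupt-to-honest messages are themselves random. Closing this gap is precisely where the two observations must be combined with care --- that honest responses depend on the corrupt world only through $X_C$ (so the irrelevant part of $\cinputs$ may be dropped and the corrupt randomness marginalized harmlessly), and that those inputs are legitimate by cheating detection (so a passive witness with the correct honest secrets realizes them). Finally, some care is needed for aborting runs, where $\dom(\store)$ truncates $\houtputs \cup O_H$ and $X$ is only the prefix of honest responses elicited before the abort; since $X$ is intersected with $\dom(\store)$ both in the hypothesis and in Definition~\ref{def-integrity}, the same argument applies verbatim to the truncated $X$, and the $\bot$-padding of Definition~\ref{def-aprogtt} ensures the two distributions are compared over the same variables. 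The remaining steps are routine manipulations of the marginal and conditional definitions of Section~\ref{section-model}.
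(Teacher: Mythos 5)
You should know at the outset that the paper contains no proof of this lemma: it is stated bare, prefaced only by the remark that the claim is ``straightforward \ldots\ since only the passive adversary can elicit a response from honest parties.'' Your plan is a good-faith elaboration of exactly that intuition, and parts of it are sound: honest assignments do evaluate identically under $\redx$ and $\aredx$, and your inference from leastness of $X_C$ to $\sep{\progtt(\prog)}{(\cinputs - X_C)}{X}$ is valid (if the complement were also dependent, leastness would give $X_C \subseteq \cinputs - X_C$, hence $X_C = \varnothing$, contradicting that the empty set is trivially independent).

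The genuine gap is the step where you drop $\cinputs - X_C$ from the conditioning set. The independence you have derived is \emph{unconditional}, but you use it to delete variables from a conditioning set that also contains $S_H$ and $X_C$; that requires independence of $\cinputs - X_C$ from $X$ \emph{conditional on} the remaining conditioning variables, which does not follow from unconditional independence---and in MPC it is precisely this implication that fails, because one-time-pad masking makes messages unconditionally independent of responses while leaving them functionally dependent once secrets are fixed. Concretely: let corrupt client $2$ send $\mesg{a}$ to honest client $1$ (passively $\mesg{a}$ carries a fresh sample $\flip{r}$), and let client $1$ reply to client $2$ with $\secret{x} \fplus \mesg{a}$. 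With uniform secrets the pair $(\mx{a}{1}, \mx{b}{2})$ is jointly uniform, hence independent, so every disambiguation of the paper's ``least set'' gives adversarial inputs that are realized by some passive run, and cheating detection holds; yet against the adversary who sends the constant $0$, the left side of the equation in Definition~\ref{def-integrity} is a point mass at the honest secret while the right side is uniform for \emph{every} admissible witness $\store'$. So your chain---adversarial conditional equals passive conditional on $S_H \cup \cinputs$, equals passive conditional on $S_H \cup X_C$, equals passive conditional on $\store'$---breaks at the second equality, and the example shows the problem is not cosmetic: under the paper's literal definitions it contradicts the lemma itself, a defect the paper's one-line ``proof'' never confronts (value agreement on $X_C$ with \emph{some} passive run is simply too weak to control conditional response distributions). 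Two further steps also need repair: the first handoff between worlds is delicate because conditioning on adversarially chosen messages can reveal different information about honest randomness than conditioning on passively generated ones (the adversary's choices are fed by honest-to-corrupt traffic); and your grafted witness $\store_{S_H} \uplus \store''_{S_C}$ is not licensed by cheating detection, which constrains $\store''_{X_C}$ but says nothing about $\store''_{S_H}$, so the graft need not realize $\store_{X_C}$ with positive probability when corrupt-to-honest messages depend on honest secrets.
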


In the case of BDOZ, cheating detection is accomplished by the information-theoretic
security of the encryption scheme \cite{evans2018pragmatic}. Furthermore, the symmetry of
the protocol in Figure \ref{fig-beaver} ensures that both parties will authenticate
shares, so it is robust to corruption of either party.

\section{Conclusion and Future Work}

The $\metaprot/\minifed$ language model developed in this paper, along
with the suite of hyperproperties explored, are a foundation
for defining low-level MPC protocols and exploring methods for security
verification.  There are several promising avenues for future
work. Concurrency will be important to capture common MPC idioms such
as commitment and circuit optimizations, and to consider the UC
security model \cite{evans2018pragmatic,viaduct-UC}. To support better
verification, integration of our tactics with a proof assistant is a
compelling direction that resonate with previous work \cite{8429300}.

Better automated verification is a main goal for future work.
Security types have been effectively leveraged for high-level
programs, for both passive and malicious security. This suggests that
they can be similarly applied at low levels in a way that is
complementary to high-level systems. Generalizing our automated
analysis in $\mathbb{F}_2$ to larger arithmetic fields is also a
compelling technical challenge with potential practical benefits,
e.g., application to arithmetic circuits.

\bibliographystyle{ACM-Reference-Format}
\bibliography{logic-bibliography,secure-computation-bibliography}

\end{document}